\algrenewcommand{\ALG@beginalgorithmic}{\scriptsize}
\algrenewcommand\alglinenumber[1]{\scriptsize #1:}
\providecommand{\tup}[1]{%
    \relax\ifmmode
%        \mathord{\mathcode`\-="702D\bf #1\mathcode`\-="2200}%
      \langle #1 \rangle%
    \else
%        $\mathord{\mathcode`\-="702D\bf #1\mathcode`\-="2200}$%
        $\langle$#1$\rangle$%
    \fi
}
\newcommand{\act}[1]{%
    \relax\ifmmode
        \mathord{\mathcode`\-="702D\sf #1\mathcode`\-="2200}%
    \else
        $\mathord{\mathcode`\-="702D\sf #1\mathcode`\-="2200}$%
%        {\sfshape #1}%
%        {\sf #1}%
    \fi
}
\newcommand{\remove}[1]{}
\def\mainlistofsymbols{
  %\newpage
  %\vspace{.25in}
  %\begin{center}
  %  {\Large\bf LIST OF SYMBOLS}
  %\end{center}
  \normalsize
  \vspace*{1.5 em}
  \@starttoc{los}
}
\def\partonelistofsymbols{
  \normalsize
  \vspace*{1.5 em}
  \@starttoc{p1los}
}
\def\parttwolistofsymbols{
  \normalsize
  \vspace*{1.5 em}
  \@starttoc{p2los}
}
\def\l@symbol#1#2{\addpenalty{-\@highpenalty} \vskip 4pt plus 2pt
{\@dottedtocline{0}{0em}{8em}{#1}{#2}}}
\newcommand{\newhiddensym}[2]{%
%% NN uncomment next line to work
%\addcontentsline{los}{symbol}{\protect\numberline{#1}{#2}}
}
\newcommand{\algIOA}[2]{\ifmmode{\text{#1}_{#2}}\else{$\text{#1}_{#2}$}\fi}
\newcommand{\EX}{\ifmmode{\xi}\else{$\xi$}\fi}
\newcommand{\EXF}{\ifmmode{\phi}\else{$\phi$}\fi}
\newcommand{\inter}[1]{
	\ifmmode{\left(\bigcap_{\mathcal{Q}\in#1}\mathcal{Q}\right)}
	\else{$\left(\bigcap_{\mathcal{Q}\in#1}\mathcal{Q}\right)$}
	\fi
}
\newcommand{\ledger}{\mathcal{L}}
\mathchardef\mhyphen="2D
\newcommand{\pr}{p}
\newcommand{\rdr}{r}
\newcommand{\bef}{\rightarrow}
\newcommand{\vid}[1]{\ifmmode{\nu_{#1}}\else{$\nu_{#1}$}\fi}
\newcommand{\seen}{\ifmmode{seen}\else{$seen$}\fi}
\newcommand{\maxts}[1]{\ifmmode{maxTS_{#1}}\else{$maxTS_{#1}$}\fi}
\newcommand{\maxtag}[1]{\ifmmode{maxTag_{#1}}\else{$maxTag_{#1}$}\fi}
\newcommand{\maxpair}[1]{\ifmmode{maxMPair_{#1}}\else{$maxMPair_{#1}$}\fi}
\newcommand{\mintag}[1]{\ifmmode{minTag_{#1}}\else{$minTag_{#1}$}\fi}
\newcommand{\maxps}{\ifmmode{maxPS}\else{$maxPS$}\fi}
\newcommand{\conftg}[1]{\ifmmode{confirmed_{#1}}\else{$confirmed_{#1}$}\fi}
\newcommand{\maxconftag}{\ifmmode{\ms{maxCT}}\else{$maxCT$}\fi}
\newcommand{\extends}{\Vert}
\newcommand{\nn}[1]{\textcolor{green}{#1}}
\newcommand{\dc}[1]{\textcolor{cyan}{#1}}
\newcommand{\cg}[1]{\textcolor{black}{#1}}
\newcommand{\af}[1]{\textcolor{red}{#1}}
\title{Appending Atomically in\\ Byzantine
%-Tolerant 
Distributed Ledgers\thanks{This work was co-funded by the European Regional Development Fund and the Republic of Cyprus through the Research Promotion Foundation (Project: POST-DOC/0916/0090), by research funds from the University of Cyprus (CG-RA2020), by the Spanish grant TIN2017-88749-R (DiscoEdge), the Region of Madrid EdgeData-CM program (P2018/TCS-4499), the NSF of China grant 61520106005, and by the Spanish Ministerio de Educaci\'on Cultura y Deporte under grant PRX18/00163.}
}
\author{Vicent Cholvi\inst{1} \and
Antonio Fern\'andez Anta\inst{2} \and
Chryssis Georgiou\inst{3}
\and\break 
Nicolas Nicolaou\inst{4}
\and
Michel Raynal\inst{5,6}
}
\institute{Universitat Jaume I, Spain\\
\email{vcholvi@uji.es} \and
IMDEA Networks Institute, Spain\\
\email{antonio.fernandez@imdea.org} \and
University of Cyprus, Cyprus\\
\email{chryssis@cs.ucy.ac.cy} \and
Algolysis Ltd, Cyprus\\
\email{nicolas@algolysis.com}
\and
Univ. Rennes IRISA, France\\
\email{michel.raynal@irisa.fr}\and
Dept. of Computing, Polytechnic Univ., Hong Kong
}
\authorrunning{V. Cholvi, A. Fern\'andez Anta, C. Georgiou, N. Nicolaou, and M. Raynal}
\newcommand{\mdledger}{\mathcal{M}}
\newcommand{\append}{{\textsc{append}\xspace}}
\newcommand{\get}{{\textsc{get}\xspace}}
\newcommand{\atomicappends}{{Atomic Appends\xspace}}
\begin{document}

\maketitle 

%\setcounter{footnote}{0}
%TODO mandatory: add short abstract of the document
\begin{abstract}
A Distributed Ledger Object (DLO) 
%as first defined in \cite{DLO_SIGACT18}) 
is a concurrent object that maintains a totally ordered sequence of records,
%(ledger), 
and supports two basic operations: \append, which appends a record at the end of the sequence, and \get, which returns the sequence of records.
In this work we provide a proper formalization of a {\em Byzantine-tolerant Distributed Ledger Object} (BDLO), which is a DLO in a distributed system in which processes may deviate arbitrarily from their indented behavior, i.e. they may be Byzantine. Our formal definition is accompanied by algorithms to implement BDLOs by utilizing an underlying  Byzantine Atomic Broadcast service. 

\noindent{We then utilize the BDLO implementations to solve the {\em \atomicappends{}} problem % (as first 
%defined in \cite{AA2019}) 
against Byzantine processes. The \atomicappends{} problem emerges 
when several clients have records to append, the record of each client has to be appended to a different BDLO,
and it must be guaranteed that either \emph{all} records are appended or \emph{none}. We present distributed algorithms implementing solutions for the \atomicappends{} problem when the clients (which are involved in the appends) and the servers (which maintain the BDLOs) may be Byzantine.}

\keywords{Distributed Ledger Object \and Byzantine Faults \and \atomicappends{}}\vspace{.7em}
\end{abstract}

%\tableofcontents

%\vspace{-5em}

%CG: I turned it into a contributions paragraph and I have structured the paper as suggested. 
\remove{
\noindent\textbf{\af{
AF: Possible structure of the paper:\\
- Present and proof correctness of a Byzantine-tolerant DLO (or BDLO) in which an unbounded number of clients can fail.\\
- Present and proof correctness Byzantine-tolerant DLO in which only a bounded number $t$ of clients can fail ($t$-BDLO), and hence no spurious records are appended.\\
- Present and proof how the two above can be combined to solve the \atomicappends{} problem. The idea is using a SDLO like in the Tokenomics 2019 paper, implemented with a set $S$ of $n$ servers up to which at most $t$ can fail. The DLOs on which the \atomicappends{} is applied are implemented as $t$-BDLO, so only if at least one correct process in $S$ appends in them the append takes place.\\
}}}
%{\bf CG: As we discussed, we will not consider VDLOS, for various reasons.}

\section{Introduction}
\label{sec:Intro}

There has been a great interest recently in the so-called crypto-technologies (e.g., blockchain systems~\cite{N08bitcoin}), and distributed ledger technology (DLT) in general~\cite{Zago2018}, which are becoming very popular and are expected to have a high impact in multiple aspects of our everyday life. Although such a recent popularity is primarily due to the explosive growth of  numerous crypto-currencies, there are many applications of this core technology that are outside the financial industry.  These applications arise from leveraging various useful features provided by distributed ledgers, such as a  decentralized information management,  immutable record keeping for possible audit trail, robustness, availability, security, and privacy (see, for instance,~\cite{DLT:health:2017,Namecoin,DLT:RealEstate:2017,DLT:Science}).
However, there are many different blockchain systems, and new ones are proposed almost everyday. Hence, it is extremely unlikely that one single DLT or blockchain system will prevail. This is forcing the DLT community to accept that it is inevitable to come up with ways to make blockchains interconnect and interoperate. 

In that direction, the work in~\cite{DLO_SIGACT18} proposed a formal definition of a reliable concurrent object, termed Distributed Ledger Object (DLO), which tries to convey the essential elements of blockchains. In particular, a DLO maintains a sequence of records, and has only two operations, {\append} and {\get}. The {\append} operation is used to add a new record at the end of the sequence, while the {\get} operation returns the sequence. Using the above-mentioned formalism, in~\cite{AA2019} the authors initiated the study of systems formed by multiple DLOs that interact among each other. 
Namely, they defined the \emph{\atomicappends{} problem,} in which 
several clients have records to append, the record of each client has to be appended to a different DLO, and it must be guaranteed that either all records are appended or none. \cg{Consider, for example, two clients $A$ and $B$ where the one, say $A$, buys a car from $B$. Record $r_A$ includes the transfer of the car's digital deed from $B$ to $A$,
and $r_B$ includes the transfer from $A$ to $B$  the agreed amount in some digital currency $c$. DLO$_A$ is a ledger maintaining digital deeds and DLO$_B$ maintains transactions in the digital currency $c$. 
So, while the two records are mutually dependent, they concern different DLOs,
hence  the Atomic Append problem requires that {\em either}
record $r_A$ is appended in DLO$_A$ and record $r_B$ is appended in DLO$_B$ {\em or} none of the records are appended in the corresponding DLOs.} 

In~\cite{AA2019} the clients were assumed to be selfish and rational 
%and risk-averse~
\cite{osborne2004introduction}, and could have different incentives for the different outcomes. Additionally, it was assumed that they could fail by crashing, which makes solving the problem more challenging. The authors showed that for some cases the existence of an intermediary is necessary for the problem solution, and proposed the implementation of such intermediary over a specialized blockchain (termed \emph{Smart DLO}, SDLO), also showing how this can be used to solve the \atomicappends{} problem even in an asynchronous, client competitive environment, in which all the clients may crash.\vspace{.5em}

%\section{Related Work}
\noindent{\textbf{Related Work:}}
The \atomicappends{} problem we describe above is very related to the multi-party fair exchange problem \cite{DBLP:conf/fc/FranklinT98}, in which several parties exchange commodities so that everyone gives an item away and receives an item in return. However, the proposed solutions for this problem rely on cryptographic techniques \cite{DBLP:conf/focs/MicaliRK03,DBLP:conf/fc/MukhamedovKR05} and are not designed for distributed ledgers. 

\sloppypar{
Among the first problems identified involving the interconnection of blockchains was the Atomic Cross-chain Swap~\cite{DBLP:conf/podc/Herlihy18}, which can also be seen as a version of the fair exchange problem. In this case, two or more users want to exchange assets (usually cryptocurrency) in multiple blockchains. Herlihy \cite{DBLP:conf/podc/Herlihy18} has formalized and generalized atomic cross-chain swaps beyond one-to-one paths, and shows how multiple cross-chain swaps can be achieved if the transfers form a strongly connected directed graph. Herlihy proves that the best strategy, in Game Theoretic sense, for the users is to
follow the proposed algorithm, and that someone that follows it will never end up worst than at the start.
Unfortunately, these guarantees do not hold if the system is asynchronous.
}

Unlike in most blockchain systems, in Hyperledger Fabric \cite{DBLP:conf/eurosys/AndroulakiBBCCC18,DBLP:conf/esorics/AndroulakiCCK18} it is possible to have transactions that span several blockchains (blockchains are called \emph{channels} in Hyperledger Fabric). This allows solving the atomic cross-chain swap problem using a third trusted channel or a mechanism similar to a two-phase commit \cite{DBLP:conf/esorics/AndroulakiCCK18}. Additionally, these solutions do not require synchrony from the system. The ability of channels to access each other's state and interact is a very interesting feature of Hyperledger Fabric, very in line with the techniques we assume from advanced distributed ledgers in this paper. Unfortunately, they seem to be limited to the channels of a given Hyperledger Fabric deployment.

There are other blockchain systems under development that, like Hyperledger Fabric, will allow interactions between the different chains, presumably with many more operations than atomic swaps. Examples are Cosmos \cite{Cosmos}
%\footnote{\url{https://cosmos.network}, accessed Nov 22, 2018.} 
or PolkaDot \cite{PolkaDot}.
%\footnote{\url{https://polkadot.network}, accessed Nov 22, 2018.}. 
These systems will have their own multi-chain technology, so only chains in a given deployment can initially interact, and other blockchain will be connected via gateways.

The practical need of blockchain systems to access the outside world to retrieve data (e.g., exchange rates, bank account balances) has been solved with the use of \emph{blockchain oracles}. These are relatively reliable sources of data that can be used inside a blockchain, typically in a smart contract. The weakest aspect of blockchain oracles is trust, since the outcome or actions of a smart contract will be as reliable as the data provided by the oracle. As of now, it seems there is no good solution for this trust problem, and blockchains have to rely on oracle services like Oraclize \cite{Oraclize}.\vspace{.5em}
%\footnote{\url{http://www.oraclize.it}, accessed Nov 22, 2018.}.

%\subsection{Contributions}
\noindent{\textbf{Contributions:}}
Contrary to what was assumed in~\cite{DLO_SIGACT18,AA2019} (i.e., both clients and servers can only fail by crashing), in existing blockchain systems, both the servers (e.g., miners) and the clients (e.g., users) could be acting maliciously. To this respect, in this work we present implementations where {\em both} the clients and the servers can be Byzantine, i.e., we present implementations of {\em Byzantine-tolerant} linearizable DLOs. Our contributions are as follows:
\begin{itemize}
    \item We provide a formalization of \emph{Byzantine-tolerant Distributed Ledger Objects} -- BDLOs (Sect.~\ref{sec:model}).
    \item We present and prove the correctness of algorithms that implement a linearizable BDLO (Sect.~\ref{sec:BDLOimpl}) in an asynchronous setting (enriched with a Byzantine Atomic Braoadcast service) in which up to $f$ servers can fail, and\break $(i)$ an unbounded number of clients can fail (Sect.~\ref{subsec:BDLOnobound}); or $(ii)$ only a bounded number $t$ of clients can fail (Sect.~\ref{subsec:BDLObounded}). In the second case we can prevent spurious records to be appended by malicious clients without the need of any additional mechanism.
    \item We provide a definition of the \emph{\atomicappends{}} problem in a system with Byzantine failures (Sect.~\ref{sec:model}).
    \item We present and prove how the above algorithms implementing BDLOs can be combined and adapted to solve the \atomicappends{} problem (Sect.~\ref{sec:AtomicAppends}). (i) First, we build a Smart BDLO (SBDLO, first presented in \cite{AA2019} for tolerating crashes) to aggregate and coordinate the append of multiple records (Sect.~\ref{subsec:SBDLO}). The SBDLO is implemented with a set $N$ of $n \geq 2t+1$ servers up to which at most $t$ can fail. The BDLOs on which the \atomicappends{} is applied are implemented as  BDLOs with a bounded number $t$ of Byzantine clients, so it is guaranteed that only if at least one correct process in $N$ appends in them, the append takes place.\\
    (ii) Then, we show how the problem can be solved by replacing the SBDLO with a ``regular" BDLO and the use of a set $N$ of at least $2t+1$ ``helper'' processes, of which at most $t$ can fail (Sect.~\ref{s:helper}). These processes monitor (by periodic \get\ operations) the BDLO for new \atomicappends{} operations. Once matching \atomicappends{} records are observed, the helper processes perform the \append{} operations to the corresponding BDLOs.
    %\item \cg{Both solutions for the \atomicappends{} problem %require consensus, in order to implement the SBDLO and %BDLO, respectively. We introduce a new concurrent object, %the Byzantine-tolerant Distributed Set object, BDSO, which %operates similarly to DBLO, but maintains a set instead of %a sequence; this object does not require consensus to be %implemented. Hence, using a DBSO and a set of helper %processes we show that the \atomicappends{} problem can be %solved without consensus.} 
\end{itemize}

\section{Model and Definitions}
\label{sec:model}
%\subsection{Distributed Ledger Objects}
\noindent{\textbf{Distributed Ledger Objects:}}
A Distributed Ledger Object (DLO) is a concurrent object that stores a totally ordered sequence of {\em records} (initially empty). A DLO $\ledger$ supports two operations, $\ledger.\append(r)$ and $\ledger.\get()$, which append a new record $r$ to the sequence and return the whole sequence, respectively~\cite{DLO_SIGACT18}. 
{A \emph{record} is a triple $\rdr=\tup{\tau, \pr, v}$, where $\pr$ is the identifier of the process that created record $\rdr$, $v$ is the data of the record drawn from an alphabet $\Sigma$, and
$\tau$ is a {\em unique} record identifier from a set ${\mathcal T}$ {(e.g., the cryptographic hash of $\tup{\pr, v}$).}} 
%$\valSet$.  
%We will use $\rdr.\pr$ to denote the id of the process that created record $\rdr$; similarly we define $\rdr.\tau$ and $\rdr.v$.
%A process $\pr$ invokes an $\ledger.{\get}_\pr()$ operation
%\footnote{We define only one operation to access the value of the ledger for simplicity. In practice, other operations, like those to access individual records in the sequence, will also be available.} to obtain the sequence $\ledger.S$ of records stored in the ledger object $\ledger$, and $\pr$ invokes an $\ledger.{\append}_\pr(\rdr)$ operation to extend $\ledger.S$ with a new record $r$.
%Initially, the sequence $\ledger.S$ is empty. 
%
\remove{
In this paper we will consider DLOs that satisfy 
the following properties,
%\footnote{Other prefix and consistency properties can be defined},
\setlist{nolistsep}
\begin{itemize}[noitemsep, leftmargin=4mm]
\item \emph{Strong prefix} property: Given any two $\ledger.{\get}()$ invocations, which return record sequences $S$ and $S'$ respectively, it must hold that $S$ is a prefix of $S'$ or vice-versa.
\remove{
\item \emph{Linearizability} property: Informally, {\append} and {\get} operations appear as if they occur instantaneously, respecting real-time ordering, and they are consistent with the resulting total order: no ${\get}()$ preceding 
%an operation 
${\append}(r)$ returns a sequence with %record 
$r$, and all {\get} operations that succeed ${\append}(r)$ do.
}
\item \emph{Linearizability} property: Informally, {\append} and {\get} operations appear as if they occur instantaneously, which yields a total order among them. This order must respect real-time ordering, and be consistent with the semantics of the operations: no ${\get}()$ preceding %
%an operation 
${\append}(r)$ returns a sequence with %record 
$r$, and all {\get} operations that succeed ${\append}(r)$ do.
\end{itemize}
}
The DLO is implemented by a set of {\em servers} that collaborate running a distributed algorithm. The DLO is used by a set of {\em clients} that access it by invoking {\append} and {\get} operations, which are translated into request and response messages exchanged with the servers. An execution is a sequence of \emph{invocation} and \emph{return}
events, starting with an invocation event. 
%We assume that 
An operation $\pi$ is \emph{complete} in an execution $\EX$, 
%of the algorithm, 
if both the invocation and matching return 
%events 
of $\pi$ appear in $\EX$. We say that an operation $\pi_1$ \emph{precedes} an operation $\pi_2$, or $\pi_2$ \emph{succeeds} $\pi_1$, in an execution $\EX$ if the return event of $\pi_1$ appears before the invocation event of $\pi_2$ in $\EX$; otherwise the two operations are \emph{concurrent}.
{In this work we focus on {\em linearizable} DLOs~\cite{DLO_SIGACT18}. Informally, under linearizability, the {\append} and {\get} operations appear as if they occur instantaneously, which yields a total order among them. This order must respect real-time ordering, and be consistent with the semantics of operations: no $\get()$ preceding $\append(r)$ returns a sequence with $r$, and all {\get} operations that succeed $\append(r)$ do.}
{By default any client can append records or access the state of a DLO with {\get}. However, if convenient, we may assume that the set of clients that can issue  ({\append} and {\get}) operations is restricted. For instance, we %may 
assume that only the creator $\rdr.\pr$ of a record $\rdr$ can append the record in a DLO $\ledger$, or restrict {\append} operations to a predefined set of clients $N$.}\vspace{.5em}

% Executions of the DLOs are \emph{well formed}, in the sense that each client 
% %(even faulty) 
% invokes a single operation at a time. That is, when a client $c$ invokes an operation $\pi$, $c$ does not invoke an 
% operation $\pi'\neq\pi$ before $\pi$ completes.
%the response of $\pi$ appears in the execution of the algorithm.}
%a client execution is well formed, in the sense that it can only have one operation pending %to be completed (i.e., it has to wait for operations to complete before issuing new ones).

%\subsection{Failure Models}
\noindent{\textbf{Failure Model:}}
In this work we assume that processes (servers and clients) can fail arbitrarily, i.e., we assume that failures are Byzantine. 
%In order to approach the problem of implementing DLOs with Byzantine failures, we define three system models with respect to failures.
%We define the following system model with respect to failures:
%    \item Byzantine-servers system: Clients are reliable, but there can be up to $f$ servers that can fail arbitrarily. The total number of servers is at least $3f+1$.
%    \item Byzantine-clients system: Servers are reliable, but any number of clients can fail arbitrarily.
{Specifically,} we assume a \emph{Byzantine system} in which 
%\emph{any number of clients,} and 
\emph{up to $f$ servers} can fail arbitrarily and that the total number of servers is at least $3f+1$. {For clients we consider two cases: $(i)$ any number of clients can be Byzantine; $(ii)$ up to $t$ clients can be Byzantine.}
{We assume that each process $p$ (client or server) has a pair of public and private keys, and a cryptographic certificate containing its public key. These certificates are generated by a reliable authority, so we discard the possibility of spurious or fake processes (there cannot be Sybil attacks), and have been distributed to all the processes that may interact with each other.
Hence, we} also assume that the messages sent by any process (server or client) are authenticated, so that messages corrupted or fabricated by Byzantine processes are detected and discarded by correct processes~\cite{CRISTIAN1995158}.
Communication channels between correct processes are reliable but asynchronous.
\vspace{.5em}
%clients trying to append the records to the respective DLOs.}

%\subsection{Byzantine-tolerant DLO}
\noindent{\textbf{Byzantine-tolerant DLO:}}
%The objective of 
The {first aim of this paper} is to propose algorithms that implement a {linearizable} DLO $\ledger$ in Byzantine systems.
%, satisfying the strong prefix and linearizability properties. 
%In this paper the objective is to obtain DLOs that satisfy the strong prefix and linearizability properties. 
Since Byzantine clients and servers can behave arbitrarily, 
%we have to adapt these properties 
{we define the properties that a DLO must satisfy adapted}
to Byzantine systems. 
%Let $t(\pi)$ the instant when an operation $\pi$ appears to occur.}
%defining which operations are considered for the definition of these two properties.
{In particular, 
%while the initial definitions consider that processes may return or append \textit{valid} sequences, this is invalidated from 
{since Byzantine processes 
%that 
may return any arbitrary sequence or append any record, the properties only consider the actions of \emph{correct processes}.%\vspace{-1em} 
}
%Therefore, in order to capture the Byzantine behavior, we re-focus the two definitions only considering the actions of \emph{correct processes}. Note that the new definition generalizes and captures the initial definitions.
}
\remove{
\nn{[NN: i feel that the following properties are loosely defined. What do we mean by consider operations? Do we look at the execution and we take out the operation from faulty processes? If we do remove those operations then how a Byzantine process will manage to append a record as we say in bullet 3? Maybe we should be more specific? An attempt follows.]} \af{[AF: Agree]}
\begin{itemize}
    \item All {\get} and {\append} operations issued by correct processes are considered. 
    \item The {\get} operations issued by Byzantine clients are not considered in the evaluation of the properties, since these clients can arbitrarily corrupt the sequence returned.
    \item An {\append} operation of record $r$ issued by a Byzantine client is considered in the evaluation of the properties {\em if and only if} some correct client obtains in the response to a {\get} operation a sequence that contains $r$.
\end{itemize}
}
\begin{itemize}%[leftmargin=4mm]
    \item {\emph{Byzantine Completeness (BC):} All the {\get} and {\append} operations invoked by correct clients eventually complete.}
    \item \emph{Byzantine Strong Prefix (BSP):} If two \textit{correct clients} issue two $\ledger.\get()$
    operations that return record sequences $S$ and $S'$ respectively, then either $S$ is a 
    prefix of $S'$ or vice-versa.
    
% \item \emph{Byzantine Linearizability (BL):} If a correct process $p$ invokes a $\ledger.{\get}()_p$ operation and gets a sequence $S$, then: (i) for all the records $r\in S$, ${\append}(r)$ \af{precedes} 
%     or is concurrent to 
%     the $\ledger.{\get}()_p$ operation, 
%     (ii) for all the $\ledger.{\append}(r)$ operations where $r\in S$ and invoked by correct processes are totally ordered respecting real-time ordering, and (iii) if a correct process $q$ invokes a $\ledger.{\get}()_q$ operation that succeeds $\ledger.{\get}()_p$ then it returns a sequence $S'$ such that $S$ is a prefix of $S'$.
    
\item \emph{Byzantine Linearizability (BL):} Let $G$ be the set of all complete {\get} operations issued by correct clients. Let $A$ be the set of complete {\append} operations $\ledger.{\append}(r)$ such that $r \in S$ and
$S$ is the sequence returned by some operation
$\ledger.{\get}() \in G$. Then linerizability holds with respect to the set of operations
$G \cup A$. This property is similar to the one described in \cite{DBLP:journals/mst/MostefaouiPRJ17} for registers.
\end{itemize}

In the remainder we say that a DLO is {\em Byzantine Tolerant} if it satisfies the properties {BC,} BSP, and BL in a Byzantine system. 
%\cg{along with the above liveness propery;} 
{We will be referring to {a Byzantine-tolerant DLO} by BDLO}.

Observe that it is possible that some \append\ 
operations issued by Byzantine processes might not be distinguished by correct processes. To this respect, we consider the notion of {\em effective appends}. 
To add a record $r$ to a BDLO, a Byzantine process $p_k$ can invoke $\append(r)$, in which case it behaves as if it was correct. It can also attempt to add a record to the BDLO without explicitly invoking an $\append()$. 
      To this  end, it can send underlying messages which, from the point of view of the correct processes, 
      simulate  an invocation of $\append(r)$. It can also intertwine several such insertion of records to the BDLO.
       Such attempts to add records, without invoking the operation $\append()$ may or not succeed. We say that such an attempt constitutes an {\em effective append} if no correct process can distinguish it from a correct
      invocation of $\append()$. Hence an effective append adds a record to the BDLO. We study this issue in Section~\ref{sec:BDLOimpl}.\vspace{.5em}

\remove{
Hence, we define three types of Byzantine-tolerant DLOs:
\begin{itemize}
    \item A DLO is Byzantine-servers-tolerant if it satisfies the strong prefix and linearizability properties in a Byzantine-servers system.
    \item A DLO is Byzantine-clients-tolerant if it satisfies the strong prefix and linearizability properties in a Byzantine-clients system.
    \item A DLO is Byzantine-tolerant if it satisfies the strong prefix and linearizability properties in a Byzantine system.
\end{itemize}
}

\sloppypar{
\noindent
{\bf Multiple DLOs (MDLO) and Multiple BDLOs (MBDLO):}
A {\em Multi-Distributed Ledger Object} $\mdledger$, termed MDLO, consists of a collection $D$ of (heterogeneous {linearizable}) DLOs
and supports the following operations: (i) $\mdledger.{\get}_\pr(\ledger)$, and (ii) $\mdledger.{\append}_\pr(\ledger,\rdr)$~\cite{AA2019}.
The ${\get}$ operation returns the sequence of records $\ledger.S$, where $\ledger\in D$. Similarly, the ${\append}$ operation 
appends the record $\rdr$ to the end of the sequence $\ledger.S$, where $\ledger\in D$.
%Similarly, a MVDLO is a collection of VDLOs. Observe that if 
From the locality property of linearizability~\cite{HW90} it follows
that a MDLO is linearizable, if it is composed of linearizable DLOs.
{\em Multiple BDLOs}, termed MBDLO, are defined similarly over a collection of BDLOs (i.e., {Byzantine-tolerant DLOs})\footnote{Note that we do not restrict whether the BDLOs in the MBDLO are implemented by common servers, or each BDLO is implemented by different servers, as long as the total number of servers and the bound on how many can fail is respected.}.
}
%that satisfy the BSP and BL properties).}
\vspace{.5em}

\noindent{{\bf The \atomicappends{} Problem:} Following~\cite{AA2019}, we define the \atomicappends{} problem, termed \emph{AtomicAppends}, that captures the properties we need to satisfy when multiple operations attempt to append {\em dependent} records on different BDLOs of an MBDLO. 
Intuitively, \emph{AtomicAppends} is analogous to the atomic cross-chain swap \cite{DBLP:conf/podc/Herlihy18} in a MBDLO.
In the crash failure model considered in~\cite{AA2019}, informally} \emph{AtomicAppends} requires that either \emph{all} records will be appended on the involved BDLOs or \emph{none}.
However, in the Byzantine failures model it is impossible from preventing a faulty client to append its record without coordination with the rest of clients. Hence, the \atomicappends{} problem has to be redefined for this failure model.
%\footnote{\cg{In~\cite{AA2019}, the \atomicappends{} problem was defined over MDLOs that tolerate crash failures. The definition can readily be adopted over MBDLOs that tolerate Byzantine failures.}} 

We say that a record $r$ \emph{depends} on a record $r'$, if $r$ may be appended on its intended BDLO, say $\ledger$,
only if $r'$ is appended on a BDLO, say $\ledger'$. 
Two records, $r$ and $r'$, are \emph{mutually dependent}, if $r$ depends on $r'$ and $r'$ depends on $r$.

\begin{definition}[$2$-AtomicAppends]
	\label{def:2AA}
	Consider two clients, $p$ and $q$, with mutually dependent records $r_p$ and $r_q$. 
		We say that records $r_p$ and $r_q$ are {\em appended atomically} in BDLO $\ledger_p$ and BDLO $\ledger_q$, respectively, 
	when:
	\begin{itemize}
		\item {\emph{AA-safety (AAS):} The record (say $r_p$ wlog) of a correct client ($p$) is appended (in $\ledger_p$) only if the record
		of the other client ($q$, which may be correct or not) is also appended (in $\ledger_q$).}
		\item {\emph{AA-liveness (AAL):}} If both $p$ and $q$ are correct, 
		then both records are appended eventually. 
	\end{itemize}
\end{definition}

{As mentioned above, it is not possible to prevent a faulty client $q$ from appending its record $r_q$ even if the correct client $p$ does not. What the safety property AAS guarantees is that the opposite cannot happen. This is analogous of the property in atomic cross-chain swaps \cite{DBLP:conf/podc/Herlihy18} that a correct process cannot end up worse than at the beginning. For instance, when the records represent the transfer of assets between $p$ and $q$, if a faulty client appends its record it is transferring its asset possibly without receiving anything in exchange.}

An algorithm {\em solves} the $2$-AtomicAppends problem under a given {system, if it guarantees the safety and
liveness properties AAS and AAL of Definition~\ref{def:2AA} in every execution of the system.} Since we consider Byzantine failures, our system model with respect to the \atomicappends{} problem is such that the correct processes want to proceed with the append of the records {(to guarantee liveness AAL),} while the Byzantine processes may
%desire to block the transaction (prevent liveness AAL) or 
%\marginpar{\af{Removed ref to liveness because does not apply when any client is Byzantine.}}
try to get correct clients to append (to prevent safety AAS).

%Observe that the safety condition must hold even if one or both $A$ and $B$ crash or do not follow protocol $P$ (i.e., even if they are malicious).
%{\bf CG: The definition of the problem is general wrt the failure model, i.e., covers any type of failure, but for our initial investigation of the problem, as we discussed, it seems natural  to consider rational clients that might fail by crashing.}\\

The $k$-\emph{AtomicAppends} problem, for $k\ge 2$, is a generalization of the $2$-AtomicAppends that can be defined in the natural way ($k$ clients, 
with $k$ mutually dependent records, to be appended to up to $k$ BDLOs.) 
From this point onwards, we will focus on the $2$-AtomicAppends problem, and when clear from the context, 
we will refer to it simply as \emph{AtomicAppends}.\vspace{.5em}

%\subsection{Byzantine Atomic Broadcast}
\noindent{\textbf{Byzantine Atomic Broadcast:}}
In the algorithms we propose in this paper for implementing BDLOs, we use %assume %available and use 
a Byzantine Atomic Broadcast (BAB) service 
%used among 
for the server communication \cite{coelho2018byzantine,CRISTIAN1995158,DBLP:conf/srds/MilosevicHS11}, that satisfies the properties of validity, agreement, integrity, and total order, defined as follows:
\begin{itemize} %[leftmargin=4mm]
	\item 
	\textit{Validity}: if a correct server BAB-broadcasts a message, then it will eventually BAB-deliver it.
	\item
	\textit{Agreement}: if a correct server BAB-delivers a message, then all correct servers will eventually BAB-deliver that message.
	\item
	\textit{Integrity}: a message is BAB-delivered by each correct server at most once, and only if it was previously BAB-broadcast.
	\item
	\textit{Total Order}: the messages BAB-delivered by correct servers are totally ordered; i.e., if any correct server BAB-delivers message $m$ before message $m'$, then every correct server must do it in that order.%\vspace{-1em}
\end{itemize}

Note that the work in \cite{DLO_SIGACT18} utilized a crash-tolerant Atomic Broadcast (AB) service to 
implement a crash-tolerant DLO. The properties assumed here for the BAB service are similar to their counterpart in the AB service, but applied only to correct processes (since in the AB service processes stop when they fail, these properties could be satisfied by the whole set of processes).
It is important to mention that it is not enough to replace the AB service with a
BAB service in the algorithms of \cite{DLO_SIGACT18} to implement a Byzantine DLO, and ensure the satisfaction of properties BC, BSP, and BL. Therefore, we need to introduce some additional machinery.

%{Note that the work in \cite{DLO_SIGACT18} utilized a %crash-tolerant atomic broadcast to 
%implement crash-tolerant DLO. 
%}

\section{Algorithms for Byzantine-tolerant DLOs}
\label{sec:BDLOimpl}
In this section, we introduce algorithms for implementing Byzantine-tolerant DLOs. First, we assume that that there is no bound on the number of clients that can fail (Section~\ref{subsec:BDLOnobound}). However, if all clients can be Byzantine, then there is no way to prevent a client from appending a meaningless record (unless we assume that servers are \emph{clairvoyants}, in the sense of detecting such records simply by checking them). In other words, effective appends are possible (cf. Sect.~\ref{sec:model}). Thus, in Section~\ref{subsec:BDLObounded} we assume that there is a bound $t$ on the maximum number of clients that can fail, and provide the algorithms that implement the corresponding DLOs. In that case, we detect the meaningless records by requesting an append operation to be issued by, at least, $t+1$ clients; hence, effective appends can be prevented.

\subsection{{Unbounded Number of Byzantine Clients}}
\label{subsec:BDLOnobound}

\newcommand{\ubdl}{u-ByDL}

%\noindent\begin{minipage}{\textwidth}
%   \centering
%\nn{[NN: I suggest we call the following algorithm u-ByDL: unbounded Byzantine Distributed Ledger]}

%\noindent
%\begin{minipage}{0.99\textwidth}
    	\begin{algorithm}[t!]
			\caption{\small API to the operations of a BDLO $\ledger$, executed by Client $p$} 
			\label{code:interface-distributed-client}
		    %\begin{multicols}{2}
		        \begin{algorithmic}[1]
		        	%\State \dk{Let $f$ be an upper bound on the number of Byzantine servers}
    				\State  \textbf{Init:} $c \leftarrow 0$
    				\Function{$\ledger.{\get}$}{~}
    				\State $c \leftarrow c + 1$
    				\State {\bf send} request ($c$, $p$, {\sc get}) to %\geq$
    				{at least $2f +1$ different servers}
    				\State \textbf{wait} resp. ($c$, $i$, {\sc getResp}, $S$) from $f +1$ different servers with the same sequence $S$
    				\State \textbf{return} $S$
    				\EndFunction
    %						\columnbreak
    				\Function{$\ledger.{\append}$}{$r$}
    				\State $c \leftarrow c + 1$
    				\State \textbf{send} request ($c$, $p$, {\sc append}, $r$) to at least $2f +1$ different servers
    				\State \textbf{wait} resp. ($c$, $i$, {\sc appendResp}, {\sc ack}) from $f +1$ different servers
    				\State \textbf{return} {\sc ack}
    				\EndFunction
    			\end{algorithmic}
    	    %\end{multicols}
		\end{algorithm}
%\end{minipage}

%\hfill

%\noindent
%\begin{minipage}{0.99\textwidth}
		\begin{algorithm}[t!]
			\caption{\small Algorithm \ubdl{}: Byzantine-tolerant DLO; Code for Server $i$}
			\label{impl:at-server}
			%\begin{multicols}{2}
			\begin{algorithmic}[1]
				\State \textbf{Init:} $S_i \leftarrow \emptyset$%; $pending_i \leftarrow \emptyset$\vspace{-1em}
%							\Statex
				\Receive{$c$, $p$, {\sc get}} 
				\State \act{BAB-broadcast}($c$, $p$, {\sc get}, $i$) 
%				\State add $(p,c)$ to $get\_pending_i$ 
				\EndReceive
				\Upon{\act{BAB-deliver}($c$, $p$, {\sc get}, $j$)}
				\If{(($c$, $p$, {\sc get}, -) has been BAB-delivered %exactly 
				$f+1$ times from different servers)} 
				%\cg{Do we need to say exactly? Isn't it at least?} %\af{I think it has to be exactly. Otherwise there %will be many responses sent}
%				\State \indent \textbf{if} $(p,c) \in get\_pending_i$ \textbf{then}
				\State \textbf{send} resp. ($c$, $i$, {\sc getResp}, $S_i$) to $p$
%				\State \indent \indent remove $(p,c)$ from $get\_pending_i$
				\EndIf
				\EndUpon
				\Receive{$c$, $p$, {\sc append}, $r$} 
				\State \act{BAB-broadcast}($c$, $p$, {\sc append}, $r$, $i$)
%				\State add $(c,r)$ to $pending_i$ 
				\EndReceive
				\Upon{\act{BAB-deliver}($c$, $p$, {\sc append}, $r$, $j$)}
				\If{($r \notin S_i$) and \\
         \hskip\algorithmicindent\hspace{0.2cm}  (($c$, $p$, {\sc append}, $r$, -) has been BAB-delivered %exactly 
				from $f+1$ different servers)} 
				\State $S_i \leftarrow S_i  \extends r$
				
%				\If {$\exists (c,r) \in pending_i$}
				\State  \textbf{send} resp. ($c$, $i$, {\sc appendResp}, {\sc ack}) to $p$
%				\State  remove $(c,r)$ from $pending_i$ 
%				\vspace{-1em}
%				\EndIf
				\EndIf
				\EndUpon
			\end{algorithmic}
			%\end{multicols}
		\end{algorithm}
%    \end{minipage}
%\end{minipage}
%\vspace{1em}

%\subsection{Client Algorithm}
\noindent{\textbf{Client Algorithm:}}
The algorithm executed by a client that invokes a {\get} or {\append} operation on a DLO $\ledger$ is presented in Code~\ref{code:interface-distributed-client}. 
%In this code, the maximum number of faulty servers $f$ is assumed to be known and used. (Observe that in the Byzantine-clients system model $f=0$.) 
% Of course, a Byzantine client does not need to follow this algorithm, and can in principle behave arbitrarily.
%
%For the purpose of the linearizability property satisfaction, 
An operation starts with the \textbf{invocation} (event) of the corresponding function in Code~\ref{code:interface-distributed-client}, and it ends when the matching \textbf{return} instruction is executed (return event). 
A Byzantine client $p$ may not follow Code~\ref{code:interface-distributed-client} (as it may behave arbitrarily) but still be able to append a record $r$ in the ledger (with an effective append). So, some correct client may obtain, in the response to a {\get} operation, a sequence that contains a record $r$ appended by a Byzantine client.
%In this case, this append operation has to be considered for linearizability (as mentioned above). 
%[NN: we mentioned start and end of an operation above the following was a repetition]
% We assume that the operation starts when Client $p$ sends the (first) message (-, $p$, {\sc append}, $r$) to any server, and ends when it gets {\sc ack} responses from $f+1$ different servers.%\vspace{-.1em}

When an operation is invoked, 
%With the algorithm of Code~\ref{code:interface-distributed-client} 
a correct client increments a local counter and then 
%simply 
sends operation requests to a set of at least $2f+1$ servers, to guarantee that at least $f+1$ correct servers receive it. A {\get} operation completes when the client receives $f+1$ \textit{consistent} replies and an {\append} completes when the client receives $f+1$ replies from different servers. Both cases guarantee the response from at least one correct server.

% These correct servers will guarantee that the operation is completed. Then, the client will know that the operation has in fact been completed when it receives a consistent response from $f+1$ different servers (of which at least one is correct).

%\subsection{Server Algorithm}
\noindent{\textbf{Server Algorithm:}}
The algorithm executed by the servers is presented in Code~\ref{impl:at-server}. 
{We denote it Algorithm \ubdl{} (from unbounded Byzantine Distributed Ledger).}
%Observe again that $f$ is known and used, and that for the Byzantine-clients system model, $f=0$. 
The algorithm uses the Byzantine Atomic Broadcast service to impose a total order in the messages shared among the servers. Operations received from clients are BAB-broadcast using this service, which are eventually BAB-delivered. An operation is processed by a server only when it has been BAB-delivered
$f+1$ times (sent by different servers).
This implies that at least one correct server sent it. The properties of the BAB service guarantee that all correct servers receive the same sequence of messages BAB-delivered, and hence process the operations at the same point, maintaining their states consistent.

%By the above discussion and the properties guaranteed by the BAB Service we conclude:\vspace{-.5em}

\begin{theorem}
\label{ubydl-correct}
Algorithm \ubdl{} implements a linearizable Byzantine Tolerant Distributed Ledger Object.
%The combination of the algorithms presented in Codes  \ref{code:interface-distributed-client} and \ref{impl:at-server} implements a linearizable Byzantine Tolerant Distributed Ledger Object.%\vspace{-1em}
\end{theorem}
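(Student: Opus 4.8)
The plan is to establish the three defining properties of a Byzantine Tolerant DLO — Byzantine Completeness (BC), Byzantine Strong Prefix (BSP), and Byzantine Linearizability (BL) — by exploiting the total order imposed by the BAB service as a single ``global log''. First I would prove a structural lemma: any two correct servers that have BAB-delivered the same prefix of the BAB stream hold identical local sequences $S_i$. This holds because $S_i$ is modified only by the step $S_i \leftarrow S_i \extends r$, which is triggered deterministically at the BAB-delivery that first makes the count of $(c,\pr,\append,r,-)$ messages reach $f+1$ while $r\notin S_i$, and the BAB agreement and total-order properties guarantee that every correct server processes exactly the same messages in exactly the same order. Hence there is a well-defined global sequence $L$ such that, at every position of the BAB stream, the common value of $S_i$ is a prefix of $L$; moreover, each record of $L$ was added at a position for which $f+1$ BAB-deliveries had occurred, so (at most $f$ servers being Byzantine) at least one correct server had broadcast it, meaning it came from a genuine client request (possibly a Byzantine client's effective append).

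For BC, consider a {\get} of a correct client: it contacts $\ge 2f+1$ servers, so $\ge f+1$ correct servers BAB-broadcast the request; by BAB validity and agreement every correct server BAB-delivers these copies, reaches the $f+1$ threshold, and replies with its current $S_i$. Since there are $\ge 2f+1$ correct servers and, by the structural lemma, they all reply with the same sequence, the client collects $f+1$ identical replies and returns. For a correct client's $\append(r)$ the same reasoning yields $\ge f+1$ correct servers that BAB-deliver the request past the threshold; I then argue the test $r\notin S_i$ succeeds there: because messages are authenticated and (by the creator-only restriction) correct servers act on an $\append(r)$ only when it is signed by $r.\pr$, no Byzantine client can get $r$ effectively appended on behalf of the correct creator, so $r$ is still fresh and is added, producing $\ge f+1$ acks. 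This gives BC.

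For BSP, a sequence returned by a correct client's {\get} is accepted only after $f+1$ matching replies, so at least one correct server returned it; that value therefore equals $S_i$ at some BAB position and is a prefix of $L$. Two such values are both prefixes of $L$, hence one is a prefix of the other. For BL, I would linearize $G\cup A$ by assigning each operation a point in the BAB total order: a {\get} in $G$ is placed at the position where it reaches its $f+1$-delivery threshold (its reply is exactly the prefix of $L$ up to that point, by the structural lemma), and an {\append} in $A$ is placed at the position where its record is inserted into $L$ (well defined: the operation is complete, so some correct server sent an ack and thus added the record; effective appends are likewise added at a unique position). Ordering $G\cup A$ by these points produces a sequential history consistent with the operation semantics — a {\get} returns precisely the records whose linearization point precedes it — and I would then verify it respects real time: if $\pi_1$ completes before $\pi_2$ is invoked, the request messages of $\pi_2$ are BAB-broadcast by correct servers only after some correct server has already BAB-delivered past $\pi_1$'s point, so $\pi_2$'s point lies strictly later in $L$.

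The main obstacle will be this last real-time argument for BL: it requires carefully tying the completion time of an operation at its client to a concrete position in the BAB total order, and then arguing from the BAB properties alone (validity, agreement, integrity, total order — there is no FIFO or causal-delivery guarantee) that a message BAB-broadcast strictly after a correct server has delivered up to position $\ell$ cannot be delivered at a position $\le \ell$ by any correct server. A secondary delicate point is the treatment of effective appends: one must make precise that the records inserted into $L$ without an explicit $\append$ invocation are exactly those that can appear in a correct client's {\get} response, so that the set $A$ in the statement of BL is well defined and, together with $G$, admits the linearization just described.
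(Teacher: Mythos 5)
Your proof is correct and rests on the same engine as the paper's --- the BAB total order forces every correct server to apply the same sequence of updates --- but you package it in a genuinely different, and cleaner, way. The paper proves BSP by contradiction via an inline induction showing that two correct servers build their sequences record by record in the same order, and then proves BL through an informal three-case analysis (\append{}--\append{} ordering, \get{} after \append{}, \get{} after \get{}) that falls back on BSP for the last case; it never names linearization points. You instead isolate the agreement-of-states fact as an explicit structural lemma, derive a single global log $L$ of which every correct server's $S_i$ is always a prefix, and then get BSP for free (two returned values are both prefixes of $L$) and BL by assigning each operation in $G \cup A$ an explicit point in the BAB stream. The real-time step you flag as the main obstacle does go through with exactly the ingredients you list: if some correct server has delivered up to position $\ell$ before $\pi_2$'s request is BAB-broadcast, then by Integrity that server cannot have delivered the request earlier, so it delivers it at a position beyond $\ell$, and Total Order propagates this to all correct servers; the paper's own argument for the \get{}--\get{} case is a terser version of the same observation. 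One caveat: your BC argument for \append{} invokes authentication plus the creator-only restriction to guarantee that $r \notin S_i$ still holds when the delivery threshold is reached. The paper states that restriction only as an optional convention and its own liveness proof silently ignores the possibility that a prior effective append of the same record suppresses the acks, so you are being more careful than the source here --- but you should state that assumption explicitly, since Algorithm \ubdl{} as written does not enforce it.
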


\begin{proof}
To proof the correctness of the algorithm we need to show that it satisfies both the liveness {property BC} and the safety properties BSP and BL. 

\paragraph{Liveness:} The algorithm guarantees {the liveness property BC} with respect to the failure model we assume. More precisely, each correct {client} sends requests to $2f+1$ servers for an operation $\pi$ and waits for $f+1$ servers to reply. Given that the channels are reliable and up to $f$ servers may fail (and thus not reply) then at least $f+1$ correct servers will eventually receive and BAB-broadcast the request from $\pi$. According to the BAB-Valitidy property, each message broadcasted by a correct server will eventually be BAB-delivered. Furthermore, by the BAB-Agreement property, all correct servers will eventually BAB-deliver the messages broadcasted by correct servers. Thus, at least $f+1$ correct servers will deliver at least $f+1$ messages broadcasted by the correct {client}. Those servers will reply to {operation} $\pi$, and hence the {client} will receive at least $f+1$ replies and terminate.  

\paragraph{Safety:} To prove safety we need to show that any execution of our algorithm satisfies properties BSP and BL.

\noindent {\bf BSP:} Byzantine Strong Prefix (BSP) requires that if two {\get} operations from two correct clients return sequences $S$ and $S'$ resp., then either $S$ is a prefix of $S'$ or $S'$ is a prefix of $S$. To derive contradiction let as assume that $S$ is not a prefix of $S'$. Let $S=r_1r_2\ldots r_n$ and $S'=r_1'r_2'\ldots r_m'$ with $m\geq n$. As $S'$ is not a prefix of $S$, then $\exists r_i$ in $S$, for $1\leq i \leq n$ s.t. $r_i\neq r_i'$. From the algorithm it follows that the {\get} operations received $S$ and $S'$ from at least one correct server as each get operations waits for $f+1$ \emph{different} servers to reply with the \emph{same} sequence. Let $s$ be the correct server that sent $S$ and $s'$ be the correct server that replied with $S'$. Before appending a record $r_j$ in its local ledger, a correct server needs to wait for $f+1$ messages that contain $r_j$ to be BAB-delivered. This guarantees that at least a single correct server received the request for appending $r_j$ and BAB-broadcasted that record. According however to BAB-Agreement if $s$ BAB-delivers $r_j$ then $s'$ will BAB-deliver $r_j$ as well. Furthermore, for each record $r_k$, for $1\leq k\leq j$, that is BAB-delivered in $s$ will also BAB-delivered in $s'$ and according to the BAB-Total Order those records will be delivered in the same order in both correct servers. This can be seen with a simple induction. The first record of $S$, $r_1$, will be BAB-delivered to both servers $s$ and $s'$ (by BAB-Agreement property). Record $r_2$ will be BAB-delivered after $r_1$ in $s$. By BAB-Agreement property $r_2$ will be BAB-delivered to $s'$ as well and by BAB-Total Order $r_2$ cannot be delivered before $r_1$. So by the delivery of $r_2$ to $s$ and $s'$, both servers contain the sequence $r_1r_2$. Suppose this is true up to record $r_k$, for $k<n$, i.e. both servers contain sequence $r_1\ldots r_k$ after the BAB-delivery of $r_k$. As noted before, record $r_{k+1}$ will be BAB-delivered to both servers $s$ and $s'$ and by the total order property $r_{k+1}$ cannot be delivered before any record $r_j$, with $j\leq k$. Thus, after the BAB-delivery of $r_{k+1}$ both servers will contain the sequence $r_1\ldots r_k,r_{k+1}$. By the induction it follows that, after the BAB-delivery of $r_n$ to both $s$ and $s'$, they contain sequences $r_1\ldots r_n$. However this is sequence $S$. Furthermore any record $r_m$, for $m>n$, that is BAB-delivered to $s'$ will be placed after $r_n$ in its local sequence. Thus, $S$ is a prefix of $S'$ and that contradicts our initial assumption. With similar reasoning we may show that if $S$ is longer than $S'$ then $S'$ be a prefix of $S$.

\noindent {\bf BL:} Finally, Byzantine Linearizability (BL) requires that: (i) {\append} operations are ordered with respect to all other operations, (ii) if a {\get} operation returns a sequence that contains a record $r_j$ then an $\append(r_j)$ operation preceded that {\get}, and (iii) if a {\get} operation completes before the invocation of another {\get} operation, i.e. ${\get}_1\bef {\get}_2$, then ${\get}_1$ returns a sequence $S_1$ that is a prefix of the sequence returned by ${\get}_2$, say $S_2$. The total ordering of the {\append} operations is ensured by the BAB service. In particular, if ${\append}(r_1)$ happens before ${\append}(r_2)$, i.e. ${\append}(r_1)\bef {\append}(r_2)$, and both are executed by correct processes, then they will send the append message to $2f+1$ servers, out of which at least $f+1$ correct servers will receive and BAB-broadcast the append. Each correct server will BAB-deliver those appends by BAB-Validity nad BAB-Agreement properties. Thus, each correct server will BAB-deliver at least $f+1$ messages for both records and will reply to the operations. Since ${\append}(r_1)\bef {\append}(r_2)$ then there exists a correct server that replies to ${\append}(r_1)$ before terminating. That server added $r_1$ in its sequence before receiving, and thus appending $r_2$. Therefore, by BAB-Total Order all the correct servers will append $r_1$ before $r_2$ in their local sequences, proving this way (i). As for point (ii) a {\get} operation obtains a sequence that contains a record $r$ only if that sequence is received from at least a single correct server $s$. Thus, since $s$ appended $r$ in its sequence then an ${\append}(r)$ operation must have executed before or concurrently with the {\get} operation. Finally, for two operations ${\get}_1$ and ${\get}_2$, s.t. ${\get}_1\bef {\get}_2$, it holds that the correct server, say $s$, that replies to ${\get}_1$ has delivered and replied to all {\append} operations with records in $S_1$ before BAB-delivering $f+1$ messages BAB-broadcasted for ${\get}_1$. Since, ${\get}_1\bef {\get}_2$, the message for ${\get}_2$ will be BAB-delivered to $s$ after the the delivery of the message of ${\get}_1$ at $s$. Since $s$ is a corect server, then by BAB-Agreement and BAB-Total Order, all the correct servers will BAB-deliver the messages from ${\get}_1$ before the messages from ${\get}_2$. It holds also, that all the correct servers delivered all the records appended before ${\get}_1$, before the delivery of the messages from ${\get}_2$ as well. Thus, ${\get}_2$ will receive a sequence $S_2$ longer or the same size as $S_1$. From the proof of BSP though it follows that $S_1$ is a prefix of $S_2$ and that completes the proof.
\qed
%
% \begin{itemize}
%     \item Consistency on the sequence is guaranteed by the Total Order property on the messages of BAB (We probably need to use a short claim by induction)
%     \item By waiting $f+1$ messages of a particular requests to be delivered, each server guarantees that at least one \emph{correct} server received the get request from the process.
%     \item Finally the process waits for $f+1$ replies with the same sequence to guarantee that the sequence is coming from a correct server. Each process will receive $f+1$ replies with the same sequence as the process sends messages to $2f+1$ so at least $f+1$ correct servers will receive it and BAB broadcast it. By the Validity and Agreement properties all the correct servers will deliver the messages broadcasted by the $f+1$ correct servers, and by the Total Order property they will deliver those messages in the same order. So at least $f+1$ correct servers will receive the broadcasted messages and thus will reply to the {\get} operation. Note also that they will send the same sequence $S$.
%     \item For the {\append} operation we guarantee that at least one correct process received the message, broadcasted the append, delivered $f+1$ broadcasts and finally  replied to the operation. 
% \end{itemize}
%
\end{proof}

\subsection{{Bounded Number of Byzantine Clients}}
\label{subsec:BDLObounded}

\newcommand{\bbdl}{b-ByDL}

%\nn{[NN: I suggest we call the following algorithm b-ByDL: bounded Byzantine Distributed Ledger]}

Observe that DLOs are oblivious to the syntax and semantics of the records they hold~\cite{DLO_SIGACT18}. Hence, in general (and in particular in Section~\ref{subsec:BDLOnobound}), we do not care about the records appended by Byzantine clients. Hence, the above algorithm does not prevent a Byzantine client from performing an effective append that adds a meaningless record $r$ on the DBLO, which may be syntactically or semantically invalid.

In this section we assume that at most $t$ clients can be Byzantine\footnote{{Recall that Sybil attacks are not possible.}}, and prevent these spurious records. This is achieved by having valid records to be appended by 
several clients. It is hence assumed that a valid record $r$ is appended by a set $N$ of at least $2t+1$ clients that invoke the operation $\append(r)$ using Code~\ref{code:interface-distributed-client} in parallel. In this section we do not go into how these clients agree on appending the same record\footnote{The next section shows a scenario where this is guaranteed.}. 
%(We assume that all clients append the same record $r$ for simplicity; if convenient, a more sophisticated scheme could be designed, in which the clients append shares of the record $r$, so that $r$ is recovered and appended from $t+1$ shares \cite{DBLP:journals/cacm/Shamir79,Blakley1979}.)
The processing of the append messages at the servers has to be changed as described in Code~\ref{impl:at-server-bounded}, {which presents the Algorithm \bbdl{} (from bounded Byzantine Distributed Ledger).}

%\noindent
%\begin{minipage}{0.99\textwidth}
		\begin{algorithm}[t!]
			\caption{\small Algorithm \bbdl{}: Byzantine-tolerant BDLO with bounded number of Byzantine clients; Code for processing the {\append} operation at Server $i$}
			\label{impl:at-server-bounded}
			%\begin{multicols}{2}
			\begin{algorithmic}[1]
				\State \textbf{Init:} $S_i \leftarrow \emptyset$%; $pending_i \leftarrow \emptyset$\vspace{-1em}
%							\Statex
				
				\Receive{$c$, $p$, {\sc append}, $r$} 
				\State \act{BAB-broadcast}($c$, $p$, {\sc append}, $r$, $i$)
%				\State add $(c,r)$ to $pending_i$ 
				\EndReceive
				\Upon{\act{BAB-deliver}($c$, $p$, {\sc append}, $r$, $j$)}
				\If{($r \in S_i$)}
				\State  \textbf{send} response ($c$, $i$, {\sc appendResp}, {\sc ack}) to $p$
				\Else				
				\If{(($c$, -, {\sc append}, $r$, -) has been BAB-delivered
          from $f+1$ different servers\\
         \hskip\algorithmicindent\hspace{0.8cm} and received from a set $C$ of $t+1$ different clients)} 
				\State $S_i \leftarrow S_i  \extends r$
				\State  \textbf{send} response ($c$, $i$, {\sc appendResp}, {\sc ack}) to all $q \in C$
				\EndIf
				\EndIf
				\EndUpon
			\end{algorithmic}
			%\end{multicols}
		\end{algorithm}
%    \end{minipage}
%\end{minipage}
%\vspace{1em}

\begin{theorem}
\label{bbydl-correct}
Algorithm \bbdl{} implements a linearizable Byzantine Tolerant Distributed Ledger Object that only contains records appended by correct clients.%\vspace{-1em}
\end{theorem}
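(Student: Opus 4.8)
The proof has two parts. First, we must re-establish the three BDLO properties (BC, BSP, BL) for Algorithm \bbdl{}; second, we must show the new conjunct, that the ledger contains \emph{only records appended by correct clients}. My plan is to reuse the argument of Theorem~\ref{ubydl-correct} almost verbatim for the parts that are unaffected, and carefully redo only the steps that touch the append path, which is the only code that changed.

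\textbf{Re-establishing BC, BSP, BL.} For liveness (BC), the \get{} path is identical, so nothing changes there. For \append{}, a correct record $r$ is by assumption invoked in parallel by a set $N$ of at least $2t+1$ clients; since at most $t$ of them are Byzantine, at least $t+1$ correct clients send $(c,p,\append,r)$ to at least $2f+1$ servers each, so at least $f+1$ correct servers receive each such request, BAB-broadcast it, and (by BAB-Validity and BAB-Agreement) every correct server eventually BAB-delivers it $f+1$ times from different servers \emph{and} from a set of at least $t+1$ different clients. Hence every correct server eventually adds $r$ to $S_i$ and replies to all of $C$; thus at least $f+1$ correct servers reply, and every correct client in $N$ collects $f+1$ responses and returns. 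For BSP and BL, the key invariant driving the proof of Theorem~\ref{ubydl-correct} is that all correct servers append records in the same order, because appends are triggered by BAB-delivered messages and BAB enforces total order; this invariant is preserved here, since the extra condition ``received from $t+1$ different clients'' is still a deterministic predicate on the BAB-delivered stream (the \texttt{append} messages carry the client id $p$, and each correct server sees the same set of BAB-delivered messages by BAB-Agreement and BAB-Total Order). So the inductive prefix argument and the linearization-point argument of Theorem~\ref{ubydl-correct} carry over with ``$f+1$ different servers'' replaced by ``$f+1$ different servers and $t+1$ different clients'' in the trigger condition.

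\textbf{No spurious records.} This is the genuinely new step. Suppose some record $r$ is in the sequence $S$ returned to a correct client by a \get{}. As in the BL argument, $S$ was supplied by at least one correct server $s$, so $s$ added $r$ to $S_s$. By Code~\ref{impl:at-server-bounded}, $s$ does so only after BAB-delivering $(c,-,\append,r,-)$ messages originating from a set $C$ of at least $t+1$ distinct clients. Since at most $t$ clients are Byzantine, $C$ contains at least one correct client $q$. A correct client only causes such a message to exist by executing Code~\ref{code:interface-distributed-client}'s $\ledger.\append(r)$ (message authentication rules out a Byzantine server or client fabricating a message attributed to $q$). Hence $r$ was appended by the correct client $q$ --- i.e., every record that appears in any sequence returned to a correct client was the argument of an \append{} invocation by some correct client. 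In particular there are no effective appends by Byzantine clients.

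\textbf{Main obstacle.} The delicate point is the interaction between the ``$f+1$ different servers'' condition and the ``$t+1$ different clients'' condition when counting BAB-delivered messages: one must be careful that these two counts are maintained over the \emph{same} BAB-delivered message stream and that a message $(c,p,\append,r,i)$ contributes to the server count via its originating server $i$ and to the client count via the embedded client id $p$, so that the two thresholds are checked independently and every correct server evaluates the same predicate at the same prefix of the stream (by BAB-Total Order and BAB-Agreement). Once this bookkeeping is pinned down, consistency of server states --- hence BSP and BL --- follows exactly as before, and the ``no spurious records'' claim is the short pigeonhole argument above.
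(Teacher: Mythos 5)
Your proposal is correct and follows essentially the same route as the paper's own proof: reuse the liveness and safety (BSP, BL) arguments of Theorem~\ref{ubydl-correct}, adapted to the extra delivery condition, and then a pigeonhole argument that a set of $t+1$ distinct appending clients must contain a correct one. Your write-up is in fact somewhat more careful than the paper's (in particular, arguing the ``no spurious records'' claim in the right direction, from a record's presence in a returned sequence back to a correct appender, and flagging that the two threshold counts are evaluated over the same totally ordered BAB stream), but the underlying argument is the same.
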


\begin{proof}
    To prove \bbdl{} correctness, we need to show that satisfies both liveness and safety properties of a BDLO with the special requirement that any record is appended by a correct client. 

    \paragraph{Livenesss:} {We prove that property BC is satisfied.} With similar arguments as in the proof of Theorem \ref{ubydl-correct} we can show that {an {\append} request issued by a correct client will be received by at least $f+1$ correct servers, and hence}
    each correct server will BAB-deliver an append message from {at least} $f+1$ servers. In addition, since we assume that $2t+1$ clients issue append requests for the same record $r$, then each correct server will receive at least $t+1$ requests for $r$. Thus, correct servers will reply to each client requesting the append and hence each correct client will receive at least $f+1$ replies and terminate. 
    
    \paragraph{Safety:} Following the proof of Theorem \ref{ubydl-correct} we can show that \bbdl{} satisfies both BSP and BL properties. What remains to show is that any record appended in the ledger is sent by a correct client. This follows from the fact that at least $t+1$ correct clients issue append requests for the same record $r$. Given that the communication channels are reliable, those messages will eventually be received by all correct servers. Since the servers wait to receive  $t+1$ append requests for server $r$ then they ensure that at least a single correct client requested $r$ to be appended. Hence, any record on the DL was appended by a correct client and this completes the proof. 
    \qed
\end{proof}

%Issue: We are assuming that there is no possibility of Sybil attack. [Or that it can be extended up to $t$ copies.]

\section{Byzantine \atomicappends{}}
\label{sec:AtomicAppends}

\newcommand{\baadl}{BAADL}

In this section we face the \atomicappends{} problem {in a system where clients and servers may be Byzantine}. For simplicity, we first consider the {$2$-AtomicAppends problem, where two clients, $p$ and $q$, attempt to append atomically two mutually dependent records $r_p$ and $r_q$, in BDLOs $\ledger_p$ and $\ledger_q$, respectively.} 
{In the rest of this section we assume that BDLOs $\ledger_p$ and $\ledger_q$ use Algorithm~\bbdl{} to tolerate up to $t$ Byzantine clients and $f$ Byzantine servers, and only accept \append{} operations from a known set $N$ of at least $2t+1$ clients, of which at most $t$ can fail (hence, effective appends are prevented)}.
% version with two clients, $2$-AtomicAppends. Hence, we first consider that there are two clients, $p$ and $q$, with mutually dependent records $r_p$ and $r_q$, to be appended atomically in BDLOs $\ledger_p$ and $\ledger_q$, respectively.

%\noindent
%\begin{minipage}{0.99\textwidth}
    	\begin{algorithm}[t!]
			\caption{\small API for for the $2$-\act{AtomicAppend} of records $r_p$ and $r_q$ in ledgers $\ledger_p$ and $\ledger_q$ by clients $p$ and $q$, respectively, using SBDLO $\ledger$. Code for Client $p$.} 
			\label{code:sdlo-client-1}
		    %\begin{multicols}{2}
		        \begin{algorithmic}[1]
    				\Function{$\act{AtomicAppends}$}{$p, \{p,q\}, r_p, \ledger_p, r_q)$}
    				\State $\ledger.{\append}(\tup{\tau, p, v})$, where $v=\tup{p, \{p,q\}, r_p, \ledger_p, r_q}$
    				\State \textbf{return} {\sc ack}
%				\Upon{receipt {\rm AppendAck} from SBDLO $\ledger$}
%				\State \textbf{return} 
%				\EndUpon
    				\EndFunction
    			\end{algorithmic}
    	    %\end{multicols}
		\end{algorithm}
%\end{minipage}

%\nn{[NN: I suggest we call the following algorithm BAADL (reminds battle or bottle :) ): Byzantine \atomicappends{} Distributed Ledger]}

\subsection{\atomicappends{} Using a Smart BDLO}
\label{subsec:SBDLO}

As proposed in \cite{AA2019}, in order to coordinate the individual appends we will use a Smart BDLO $\ledger$, that is a special BDLO to which clients $p$ and $q$ delegate the task of appending their records in the respective ledgers. They do that by appending in the SBDLO a description of the \atomicappends{} operation to be completed, as shown in Code \ref{code:sdlo-client-1}.
Client $p$ uses the \append{} operation to provide the SBDLO with the data it requires to complete the \atomicappends{}, namely the participants in the \atomicappends{}, the record $r_p$, the BDLO $\ledger_p$, and the
record $r_q$ the other client is appending.
%, i.e., $h_q=\mathit{HASH}(r_q)$, where $\mathit{HASH}()$ is an appropriate cryptographic summary. It is assumed that the summary $h_q$ has been provided by $q$ to $p$, and that $p$ cannot reconstruct $r_q$ from it\footnote{In a practical setting it may be needed to have additional guarantees from $h_q$. For instance, a mechanism (timestamp or nonce) could be introduced to prevent it to be reused, or it may contain a (zero-knowledge) proof that $r_q$ is correct.}. We avoid assuming that $q$ sends the record $r_q$ to $p$ because that would require additional mechanisms to prevent $p$ from directly appending it to $\ledger_q$, should it be Byzantine.

The SBDLO $\ledger$ is a BDLO with unbounded number of faulty clients but that \emph{only allows the creator of a record to append it.} $\ledger$ is implemented with a set $N$ of at least $2t+1$ servers, out of which at most $t$ may be Byzantine. Hence, the {\append} operation in the client side (Line 2 in Code \ref{code:sdlo-client-1}) is implemented as described in Code \ref{code:interface-distributed-client}, with $t$ instead of $f$ as the maximum number of faulty servers.

%\noindent
%\begin{minipage}{0.99\textwidth}
		\begin{algorithm}[t!]
			\caption{\small Algorithm \baadl{}: Byzantine-tolerant Smart SBDLO; Only the code for the \append{} operation is shown; Code for Server $i$}
			\label{code:sdlo-server}
			%\begin{multicols}{2}
			\begin{algorithmic}[1]
				\State \textbf{Init:} $S_i \leftarrow \emptyset$%; $pending_i \leftarrow \emptyset$\vspace{-1em}
%							\Statex
%				\Receive{$c$, $p$, {\sc get}} 
%				\State \textbf{send} response ($c$, $i$, {\sc getResp}, $\bot$) to $p$
%				\EndReceive
				\Receive{$c$, $p$, {\sc append}, $r$} 
				\State \act{BAB-broadcast}($c$, $p$, {\sc append}, $r$, $i$)
%				\State add $(c,r)$ to $pending_i$ 
				\EndReceive
				\Upon{\act{BAB-deliver}($c$, $p$, {\sc append}, $r$, $j$)}
				\If{($r \notin S_i$) and \\
         \hskip\algorithmicindent\hspace{0.2cm} (($c$, $p$, {\sc append}, $r$, -) has been BAB-delivered %exactly 
				from $t+1$ different servers)} 
				\State $S_i \leftarrow S_i  \extends r$
				\If {$r.v=\tup{p, \{p,q\}, r_p, \ledger_p, r_q}$ and
				$\exists r' \in S_i : r'.v=\tup{q, \{p,q\}, r_q, \ledger_q, r_p}$}
				%such that $h_p=\mathit{HASH}(r_p)$ and $h_q=\mathit{HASH}(r_q)$}
    				\State $\ledger_p.{\append}(r_p)$ \label{appendrp}
    				\State $\ledger_q.{\append}(r_q)$			\label{appendrq}	
%				\State  remove $(c,r)$ from $pending_i$ 
%				\vspace{-1em}
				\EndIf
				\State  \textbf{send} response ($c$, $i$, {\sc appendResp}, {\sc ack}) to $p$
				\EndIf
				\EndUpon
			\end{algorithmic}
			%\end{multicols}
		\end{algorithm}
%\end{minipage}

Code \ref{code:sdlo-server} describes the \append{} operation of Algorithm \baadl{} (from Byzantine \atomicappends{} Distributed Ledger) that implements the SBDLO (the rest of the algorithm is as in Code \ref{impl:at-server}).
As expected, it is very similar to the implementation of a BDLO without restrictions in the number of Byzantine clients, but with a difference. 
%First, clients are not allowed to access the state of the ledger, which is implemented by having the {\get} operations return a especial value $\bot$. The reason is that we do not want to allow a Byzantine client (may even be one not involved in the \atomicappends{} operation) to obtain a record from the SBDLO and append it directly, possibly violating the safety property AAS. Second, 
Every time a record $r$ is added to the sequence $S_i$, it is checked whether a matching record $r'$ is already there. This is the case if $r.v=\tup{p, \{p,q\}, r_p, \ledger_p, r_q}$, and $r'.v=\tup{q, \{p,q\}, r_q, \ledger_q, r_p}$.
%, and $h_p=\mathit{HASH}(r_p)$ and $h_q=\mathit{HASH}(r_q)$. 
If so, the corresponding append operations are issued in the respective BDLOs $\ledger_p$ and $\ledger_q$. 

As mentioned above,
each of the ledgers $\ledger_p$ and $\ledger_q$ are BDLOs with a known, {bounded} set $N$, of at least $2t+1$ clients ({which are the servers implementing the SBDLO $\ledger$}), out of which at most $t$ can be Byzantine. 
These ledgers are implemented in a system of at least $2f+1$ servers out of which at most $f$ can be Byzantine, as presented in Algorithm~\bbdl{} (Code \ref{impl:at-server-bounded}). Hence, a record is appended only if at least $t+1$ clients from $N$ issue append operations of the record. {Notice that unlike in the case of ad-hoc clients, in the case of SBDLO at least $t+1$ correct SBDLO servers will receive the requests by the external clients $p$ and $q$ and will issue the same \append{} operation in ledgers $\ledger_p$ and $\ledger_q$, making bounded BDLOs a practical system.} Moreover, Line 2 of Code \ref{impl:at-server-bounded} is modified to verify that a client $p$ attempting to append is in fact in the set $N$ of authorized clients.

\begin{theorem}
\label{baadl-correct}
The combination of the API of Code \ref{code:sdlo-client-1} and the Algorithm \baadl{} solves the $2$-AtomicAppends problem.
\end{theorem}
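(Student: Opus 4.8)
The plan is to verify the two properties of Definition~\ref{def:2AA}, AA-safety and AA-liveness, separately, using Theorems~\ref{ubydl-correct} and~\ref{bbydl-correct} as black boxes for the ledgers involved: the SBDLO $\ledger$ is an unbounded-client BDLO implemented by Algorithm~\baadl{} (whose non-\append{} code coincides with Algorithm~\ubdl{}, with threshold $t$ in place of $f$), while $\ledger_p$ and $\ledger_q$ run Algorithm~\bbdl{} with the set $N$ of SBDLO servers (of size $\ge 2t+1$, at most $t$ of them Byzantine) as their only authorized clients. One auxiliary fact extracted from the proof of Theorem~\ref{ubydl-correct} will be used repeatedly: since each correct SBDLO server applies a BAB-delivered \append{} only after it has been BAB-delivered from $t+1$ distinct servers, and since BAB-Agreement and BAB-Total Order make every correct server BAB-deliver the same messages in the same order, all correct SBDLO servers evolve their local sequence $S_i$ identically and, in particular, trigger exactly the same sequence of nested $\ledger_p.\append(r_p)$/$\ledger_q.\append(r_q)$ calls in Code~\ref{code:sdlo-server}. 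Call this the \emph{replica-consistency} fact.

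For AA-liveness, assume $p$ and $q$ are correct. By Code~\ref{code:sdlo-client-1} each appends to $\ledger$ its description record, with value $\tup{p, \{p,q\}, r_p, \ledger_p, r_q}$ and $\tup{q, \{p,q\}, r_q, \ledger_q, r_p}$ respectively, and by property BC of the SBDLO (Theorem~\ref{ubydl-correct}) both appends complete. Since a correct SBDLO server sends its \append{}-ack only after inserting the record in $S_i$, at least one correct server inserts each description record, so by the replica-consistency fact every correct SBDLO server eventually holds both description records in $S_i$; when it inserts the later of the two, the conditional of Code~\ref{code:sdlo-server} is satisfied and the server issues $\ledger_p.\append(r_p)$ followed by $\ledger_q.\append(r_q)$. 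As at least $t+1$ SBDLO servers are correct and are precisely the authorized clients of $\ledger_p$ and $\ledger_q$, the $t+1$-request threshold of Algorithm~\bbdl{} is met for both records, so by property BC of Theorem~\ref{bbydl-correct} $r_p$ is appended in $\ledger_p$ and $r_q$ in $\ledger_q$, establishing AAL.

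For AA-safety, assume $p$ is correct and $r_p$ has been appended in $\ledger_p$; we must show $r_q$ is appended in $\ledger_q$. By Algorithm~\bbdl{}, a record enters $\ledger_p$ only after $t+1$ distinct authorized clients (SBDLO servers) request it, and since at most $t$ SBDLO servers are Byzantine, at least one \emph{correct} SBDLO server $s$ invoked $\ledger_p.\append(r_p)$. By Code~\ref{code:sdlo-server}, a correct server issues $\ledger_p.\append(r_p)$ only inside the conditional that also requires the matching record with value $\tup{q, \{p,q\}, r_q, \ledger_q, r_p}$ to be present in $S_i$, and on the next line it issues $\ledger_q.\append(r_q)$; hence $s$ held both description records in $S_i$ and issued both appends. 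By the replica-consistency fact, every correct SBDLO server reaches the same state and likewise issues $\ledger_q.\append(r_q)$, so again $\ge t+1$ correct authorized clients of $\ledger_q$ request $r_q$, which by Algorithm~\bbdl{} and property BC is therefore appended in $\ledger_q$. A Byzantine $q$ is covered as well: being outside $N$, it cannot append to $\ledger_p$ on its own, so the above is the only way $r_p$ can have entered $\ledger_p$. Together, AAS and AAL yield the claim.

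The delicate point is not any single computation but the safety chain ``$r_p\in\ledger_p$ $\Rightarrow$ a correct SBDLO server issued $\ledger_p.\append(r_p)$ $\Rightarrow$ that server also issued $\ledger_q.\append(r_q)$ $\Rightarrow$ every correct SBDLO server issued it $\Rightarrow$ the $t+1$-request threshold of $\ledger_q$ is met'': each link relies on a different ingredient (the \bbdl{} threshold together with the bound $t$ on Byzantine SBDLO servers; the atomically-nested conditional of Code~\ref{code:sdlo-server}; the BAB-based replica consistency of the SBDLO; and again the \bbdl{} threshold). In particular, one must use that the at most $t$ Byzantine SBDLO servers cannot by themselves force $r_p$ into $\ledger_p$, which is exactly why $\ledger_p$ and $\ledger_q$ are implemented with Algorithm~\bbdl{} rather than with Algorithm~\ubdl{}.
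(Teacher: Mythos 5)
Your proposal is correct and follows essentially the same route as the paper's own proof: liveness via BC of the SBDLO, the matching-record conditional firing at all ($\ge t+1$) correct SBDLO servers, and the $t+1$-client threshold of Algorithm~\bbdl{}; safety via the observation that $r_p$ can only enter $\ledger_p$ through $t+1$ members of $N$, hence through at least one correct SBDLO server that necessarily also issues $\ledger_q.\append(r_q)$, plus the rejection of appends from processes outside $N$. The only differences are presentational --- you argue safety directly with an explicit ``replica-consistency'' fact where the paper argues by contradiction with a two-case split --- and these do not change the substance.
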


\begin{proof}
Let us first prove the liveness property AAL. Consider two correct clients $p$ and $q$ with records $r_p$ and $r_q$, to be appended atomically in BDLOs $\ledger_p$ and $\ledger_q$, respectively. Since it is correct, eventually $p$ will issue the call $\act{AtomicAppends}(p, \{p,q\}, r_p, \ledger_p, r_q)$, which from Code \ref{code:sdlo-client-1} will trigger $\ledger.{\append}(\tup{\tau, p, v})$, with $v=\tup{p, \{p,q\}, r_p, \ledger_p, r_q}$.
From Code~\ref{code:interface-distributed-client} (with $t$ instead of $f$) and the process of the append messages in Algorithm \baadl{}, eventually all the correct servers $i$ of the SBDLO will insert $\tup{\tau, p, v}$ in their sequences $S_i$. Similarly,
eventually all the correct servers $i$ of the SBDLO will insert $\tup{\tau', q, v'}$ with $v'=\tup{q, \{q,p\}, r_q, \ledger_q, r_p}$ in their sequences $S_i$.

Let us consider one such server $i$, and assume wlog that $\tup{\tau, p, v}$
%$\langle p, \{p,q\}, r_p, \ledger_p, h_q \rangle$ 
is inserted fist in $S_i$. Then, as soon as 
$\tup{\tau', q, v'}$
%$\langle q, \{q,p\}, r_q, \ledger_q, h_p \rangle$ 
is also inserted, the condition in Line 9 of Code \ref{code:sdlo-server} holds, and
the operations $\ledger_p.{\append}(r_p)$ and $\ledger_q.{\append}(r_q)$ are issued. Since the SBDLO is implemented with at least $2t+1$ servers out of which at most $t$ are Byzantine, at least $t+1$ servers will issue these {\append} operations. Hence, from Theorem~\ref{bbydl-correct},
$r_p$ and $r_q$ will be appended to BDLOs $\ledger_p$ and $\ledger_q$, respectively.

We now prove the safety property AAS. Let us assume to reach a contradiction that AAS is not satisfied because, wlog, the record $r_p$ of correct client $p$ is appended in $\ledger_p$ while $r_q$ is never appended in $\ledger_q$. Observe that $\ledger_p$ is a BDLO implemented with Algorithm~\bbdl{}, which requires at least $t+1$ different clients appending the same record for the record to be in fact appended.
There are two possibilities depending on who are these processes that append $r_p$ in $\ledger_p$: they are (1) SBDLO servers or (2) they include processes that are not SBDLO servers. Let us consider each case separately.

In Case (1), there are at least $t+1$ SBDLO servers that append $r_p$ in $\ledger_p$. Then, at least one is correct, and does it by executing Lines \ref{appendrp} and \ref{appendrq} in Code~\ref{code:sdlo-server}. But then, all correct servers of SBDLO execute these lines, and since there are at least $t+1$ correct servers, record $r_q$ is also appended in $\ledger_q$, which is a contradiction.

In Case (2), by assumption only the set $N$ of servers of the SBDLO are allowed to issue append operation in $\ledger_p$, and any append message sent by a process not in $N$ will be rejected (recall that messages are authenticated). Hence, this case is not possible.
    \qed
\end{proof}

Code \ref{code:sdlo-client-1} and the Algorithm \baadl{} are easily generalized to $k$-AtomicAppends. In Line 2 of Code \ref{code:sdlo-client-1} the client $p$ sends the set of $k$ clients appending records, and the $k-1$ records appended in addition to $r_p$. Similarly, in Line 9 of Code~\ref{code:sdlo-server} the condition becomes that all $k$ records to be appended are already in $S_i$. If so, all of them are appended in the $k$ corresponding BDLOs.

\subsection{\atomicappends{} Using a BDLO and a Set of Helper Processes}
\label{s:helper}

While using a Smart BDLO solves the \atomicappends{} problem as described above, it requires to implement a DLO that is aware of the contents of the records that are appended into it. This is at conflict with the initial spirit of the
DLO definition, that meant to be a data structure that was oblivious to the records syntax and semantics. In this section we describe how in fact the SBDLO can be replaced by a regular BDLO implemented with Algorithm \ubdl{} (Code \ref{impl:at-server}) and a set $N$ of at least $2t+1$ helper processes, of which at most $t$ can fail.

From the point of view of clients $p$ and $q$, the new approach is transparent. Still they execute Code \ref{code:sdlo-client-1} to issue an \atomicappends{} operation, with the difference that now ledger $\ledger$ is not ``smart" anymore, but a regular Byzantine tolerant DLO (e.g., implemented with Code \ref{impl:at-server}).
Similarly, from the point of view of ledgers $\ledger_p$ and $\ledger_q$ the new approach is transparent, except that now their set $N$ of legal clients to append in them is the set of helper processes described above.

%\noindent
%\begin{minipage}{0.99\textwidth}
		\begin{algorithm}[t!]
			\caption{\small Algorithm used by a helper process to complete \atomicappends{} operations; Code for process $x$}
			\label{impl:no-sdlo}
			%\begin{multicols}{2}
			\begin{algorithmic}[1]
				\State \textbf{Init:} $O_x \leftarrow \emptyset$
				\Loop
				\Comment{Loop forever; execute loop body periodically}
				\State $S_x \leftarrow \ledger.\get{}()$
				\While {$\exists r, r' \in S_x \setminus O_x :r.v=\tup{p, \{p,q\}, r_p, \ledger_p, r_q} \land r'.v=\tup{q, \{p,q\}, r_q, \ledger_q, r_p}$}
    				\State $\ledger_p.{\append}(r_p)$ \label{appendrp-helper}
    				\State $\ledger_q.{\append}(r_q)$			\label{appendrq-helper}	
 				    \State $O_x \leftarrow O_x \cup \{r, r'\}$
				\EndWhile
				\EndLoop
			\end{algorithmic}
			%\end{multicols}
		\end{algorithm}
%\end{minipage}

Hence, the main difference is in the helper processes in set $N$. These processes are continuously running a loop that monitors $\ledger$ for new \atomicappends{} operations to complete. This process is described in Code \ref{impl:no-sdlo}.
As can be seen there, a helper process $x$ periodically issues a \get{} operation on $\ledger$ to obtain its latest contents. Then it checks if it contains pairs of matching \atomicappends{} records that correspond to operations that have not been completed yet. (Observe that $x$ maintains a set $O_x$ of records from $\ledger$ that have been already
used.) If so, it issues the corresponding \append{} operations to complete them.

The proof that this new approach solves the AtomicAppends problem is almost verbatim to the proof of Theorem \ref{baadl-correct}, and it is omitted.

\newcommand{\ubso}{u-ByDS}

%%%%%%%%%%%%%%%%%%%%%%% REMOVING BDSO %%%%%%%%%%%%%%%%%
\remove{
\subsection{\atomicappends{} using a Byzantine-tolerant Set Object, and a Set of Helper Processes}

Finally, we observe that the Smart BDLO and the regular BDLO used in the two previous schemes in fact do not require the records appended to be ordered. This means that their state can be a set of records instead of a totally ordered sequence of records. This observation is not trivial, since it implies that they do not require to solve consensus to maintain it, unlike the previous described Byzantine-tolerant DLOs (that use the Byzantine Atomic Broadcast service, which is equivalent to consensus).

Hence, we define a new concurrent object that we call a Byzantine-tolerant Distributed Set Object (BDSO). To maintain the presentation simple, we assume that a BDSO has the same two operations \append{} and \get{} than a BDLO, but that the latter returns a set instead of a sequence. Hence, the state of a BDSO is the set of records that have been appended to (inserted into) it. A BDSO is defined in similar terms as a BDLO was, but dealing with sets instead of sequences. Hence, it must satisfy the same properties Byzantine Completeness (BC), Byzantine Strong Prefix (BSP, but changing prefix to subset), and Byzantine  Linearizability (BL).

Replacing the BDLO $\ledger$ used in Section \ref{s:helper} to solve AtomicAppends with a BDSO does not change the logic and the behavior of the approach. Hence, the AtomicAppends problem can be solved using a BDSO and a set $N$ of  helper processes.

Code \ref{impl:dso-server} presents Algorithm \ubso{} 
that implements a BDSO using a Byzantine Reliable Broadcast (BRB) service \cite{DBLP:journals/iandc/Bracha87,RaynalBook18}. 
%This service has two primitives \act{BRB-broadcast} and \act{BRB-deliver}, which are analogous to those of Byzantien Atomic Broadcast. The BRB service satisfies the same properties as Byzantine  Atomic  Broadcast except Total Order. 
As mentioned above, the BRB service can be implemented in a Byzantine asynchronous distributed system with up to $f$ faulty processes as long as the total number of processes is at least $3f+1$ \cite{RaynalBook18}. Observe that Consensus cannot be solved in such a system. Hence, the following corollary.

\begin{corollary}
The AtomicAppends problem can be solved in a Byzantine asynchronous distributed system without the need of a Consensus (or Atomic Broadcats) service.
\end{corollary}

Of course, it is quite possible that the ledgers $\ledger_p$ and $\ledger_q$ where records $r_p$ and $r_q$ have to be appended in a 2-AtomicAppends instance need Consensus to be implemented.

%\noindent
%\begin{minipage}{0.99\textwidth}
		\begin{algorithm}[t!]
			\caption{\small Algorithm \ubso{}: Byzantine-tolerant DSO; Code for Server $i$}
			\label{impl:dso-server}
			%\begin{multicols}{2}
			\begin{algorithmic}[1]
				\State \textbf{Init:} $S_i \leftarrow \emptyset$%; $pending_i \leftarrow \emptyset$\vspace{-1em}
%							\Statex
				\Receive{$c$, $p$, {\sc get}} 
				\State \act{BRB-broadcast}($c$, $p$, {\sc get}, $i$) 
%				\State add $(p,c)$ to $get\_pending_i$ 
				\EndReceive
				\Upon{\act{BRB-deliver}($c$, $p$, {\sc get}, $j$)}
				\If{(($c$, $p$, {\sc get}, -) has been BRB-delivered %exactly 
				$f+1$ times from different servers)} 
				%\cg{Do we need to say exactly? Isn't it at least?} %\af{I think it has to be exactly. Otherwise there %will be many responses sent}
%				\State \indent \textbf{if} $(p,c) \in get\_pending_i$ \textbf{then}
				\State \textbf{send} resp. ($c$, $i$, {\sc getResp}, $S_i$) to $p$
%				\State \indent \indent remove $(p,c)$ from $get\_pending_i$
				\EndIf
				\EndUpon
				\Receive{$c$, $p$, {\sc append}, $r$} 
				\State \act{BRB-broadcast}($c$, $p$, {\sc append}, $r$, $i$)
%				\State add $(c,r)$ to $pending_i$ 
				\EndReceive
				\Upon{\act{BRB-deliver}($c$, $p$, {\sc append}, $r$, $j$)}
				\If{($r \notin S_i$) and (($c$, $p$, {\sc append}, $r$, -) has been BRB-delivered %exactly 
				from $f+1$ different servers)} 
				\State $S_i \leftarrow S_i  \cup \{r\}$
				
%				\If {$\exists (c,r) \in pending_i$}
				\State  \textbf{send} resp. ($c$, $i$, {\sc appendResp}, {\sc ack}) to $p$
%				\State  remove $(c,r)$ from $pending_i$ 
%				\vspace{-1em}
%				\EndIf
				\EndIf
				\EndUpon
			\end{algorithmic}
			%\end{multicols}
		\end{algorithm}
%    \end{minipage}
%\end{minipage}
%\vspace{1em}
}
%%%% END OF REMOVE %%%%%

\section{Conclusions}
In this work we formalized the notion of a Byzantine Tolerant Distributed Ledger Object (BDLO) and proposed algorithms implementing such objects in distributed settings where a subset of clients and servers may be Byzantine. We demonstrated the utility of our BDLO implementations by providing  solutions to the \atomicappends{} problem, where clients have mutually dependent records to be appended,
 the record of each client has to be appended to a different BDLO,
and either all records are appended or none. 

Our formalization of BDLOs requires a strong prefix property, which prevents the existence of more than one sequence at any point in time (i.e., no ``forks'' allowed, as termed in the blockchain literature). As shown in~\cite{DBLP:conf/spaa/AnceaumePLPP19,DLO_SIGACT18}, this property requires consensus. Therefore, it would be interesting to investigate more relaxed (weaker) versions of this property (that might not require consensus) and study the guarantees than can be provided within our framework.

%\cg{En route, we have introduced a new concurrent object, the %Byzantine-tolerant Distributed Set Object (BDSO), which %operates similarly to BDLO, but it maintains a set, rather %than a sequence. With the use of a BDSO, we show that the %\atomicappends{} problem can be solved without consensus. [If %space permits, we could list a few directions for future %research.]}

\newpage

\remove{
We believe that it may be worth exploring different configurations when we have DLOs and a SDLO:

- Each ledger uses a disjoint set of servers.

- All ledgers (DLOs and SDLO) use the same set of servers.

- The DLOs use the same set but the SDLO use a different set of servers.

Does it make sense to have failure detectors of Byzantine nodes? We need to think about it. Seems to be interesting for a future work.
[If I recall, there was a Byzantine failure detector proposed some time ago. I will look for it.]
} % end remove

\remove{
\section{Algorithms for Byzantine-tolerant Validated DLOs}
{\bf We discussed that VDLO seems to have some issues to consider, so for now, lets leave it out.}

If instead of a DLO we have a Validated DLO (VDLO), each record can be checked for validity before being appended. The Code 2 changes in the sense that in Line 10 no need to wait for $f+1$ servers delivering the same record. It is enough to receive the record the first time and check for validity before appending.

When solving \atomicappends{} with VDLOs, it is still needed that the VDLO received $t+1$ appends for the same valid record from different clients (the servers implementing the SDLO).

} % end remove

%%%%% START REMOVAL %%%%%
\remove{
---- Antonio version above here

\cg{The above specification is an adaptation of the one  from~\cite{coelho2018byzantine}.}  

\subsection{System Model}

We have $n\geq 3f+1$ servers where up to $f$ might be Byzantine. We consider an asynchronous system (but constraint under the conditions needed to implement the
Byzantine atomic broadcast service). At first we assume that clients are crash-prone only (not Byzantine); this means that they run correctly Code 1 unless they crash.  At the end of the paper, we can argue that we can also tolerate Byzantine clients, that is, clients that are deviating from Code 1, without getting into the semantics of the records. In such a case we will need VDLOs (our Byzantine-tolerant implementation of DLO can trivially used to implement Byzantine-tolerant VDLOs).

\section{Basic implementation}
\label{sec:basic}

We assume that the number of servers is unknown. We also assume that clients are aware of the faulty nature of servers, and it is known (an upper bound on) the maximum number of Byzantine servers $f$. 

\section{Atomic consistency}

The  interface is presented in Code~\ref{code:interface-distributed}. As it can be seen there, every operation request is sent to at least $f+1$ servers, to guarantee that at least one correct server receives and processes the request. \cg{You need to send it to at least $2f+1$ to ensure that at least $f+1$ correct nodes get the request and hence you get $f+1$ replies. If you send it only to $f+1$, and $f$ are the malicious ones and they simply choose not to reply, then you will not get the needed $f+1$ replies. } Moreover, the return value is obtained from $f+1$ servers with the same response: at least one correct server, and the rest  with the same (valid) response. In order to differentiate from different responses, all operations (and their requests and responses) are uniquely numbered with counter $c$: once a client returns a value for each $c$ (i.e., after $f+1$ responses with the same value), the rest of the responses will be identified and ignored. \dc{Furthermore, it must be taken into account that Byzantine servers can confuse correct ones by forwarding appropriately altered messages on behalf of correct servers. Therefore, we assume the messages exchanged during a broadcast are authenticated, so that messages corrupted \cg{or fabricated} by Byzantine servers can be recognized and discarded by correct servers~\cite{CRISTIAN1995158}.} \cg{In addition, for a record to be appended in the DLO, it must be signed by the client that proposed it; this prevents malicious servers to fabricate records.}

Specification of Byzantine Atomic Bcast essentially taken from \cite{coelho2018byzantine} combined with the original definition in the DLO paper:
\begin{itemize} 
	\item 
	\textit{Validity}: if a correct server broadcasts a message, then it will eventually deliver it.
	\item
	\textit{Agreement}: if a correct server delivers a message, then all correct servers will eventually deliver that message.
	\item
	\textit{Integrity}: a message is delivered by each correct server at most once, and only if it was previously broadcast.
	\item
	\textit{Total Order}: the messages delivered by correct processes are totally ordered; that is, if any correct server delivers message $m$ before message $m'$, then every correct server must do it in that order.%\vspace{-1em}
\end{itemize}

\begin{figure}
		\begin{algorithm}[H]
			\caption{\small External Interface of a Distributed Ledger Object $\ledger$ Executed by a Process $p$}
			\label{code:interface-distributed}
				\begin{multicols}{2}
			\begin{algorithmic}[1]
				\State \dk{Let $f$ be an upper bound on the number of Byzantine servers}
				\State  \textbf{Init:} $c \leftarrow 0$
				\Function{$\ledger.{\get}$}{~}
				\State $c \leftarrow c + 1$
				\State {\bf send} request (c, {\sc get}) to\dk{, at least,\cg{ $2f +1$ } servers}
				\State \textbf{wait} response (c, {\sc getRes}, $V$) from \dk{$f +1$ different servers with the same $V$}
				\State \textbf{return} $V$
				\EndFunction
						\columnbreak
				\Function{$\ledger.{\append}$}{r}
				\State $c \leftarrow c + 1$
				\State \textbf{send} request (c, {\sc append}, $r$) to\dk{, at least,\cg{ $2f +1$ } servers}
				\State \textbf{wait} response (c, {\sc appendRes}, $res$) from \dk{$f +1$ different servers with the same $res$}
				\State \textbf{return} $res$
				\EndFunction
			\end{algorithmic}
				\end{multicols}
		\end{algorithm}%\vspace{-2em}
\end{figure}

\dk{We say that atomic broadcast can be \emph{reduced} to consensus provided it can be implemented by using consensus objects. In~\cite{DBLP:conf/srds/MilosevicHS11}, the authors show that, if servers may fail in a Byzantine fashion,  atomic broadcast can be reduced to \emph{Byzantine consensus}. Since Byzantine consensus objects are available~\cite{10.1007/978-3-540-92221-6_4}, then the algorithm presented in Code~~\ref{impl:at} can be also implemented. As a matter of fact, the implementation in~\cite{10.1007/978-3-540-92221-6_4} doesn't need to know the number of servers, but only an upper bound on the maximum number of Byzantine servers.}

\cg{CG: The result of~\cite{DBLP:conf/srds/MilosevicHS11} is perfect for us. In order to make
	things completely clear, I suggest we specify precisely the Atomic Broadcast problem
	under Byzantine servers, and the corresponding consensus problem that it can be reduced to (my 
	understanding, it is the one referred to as strong validity consensus below).}

\dk{Note that, in~\cite{DBLP:conf/srds/MilosevicHS11}, it shown that, while some variants of Byzantine consensus are harder than atomic broadcast and, therefore, can be used to implement it (e.g., the one considered in this paper, which is referred as \emph{strong validity consensus}), other variants  are not sufficient to solve atomic broadcast (e.g., the \emph{weak unanimity consensus}).}

Note also that, in order to  guarantee byzantine consensus, at least $3 f +1$ servers are needed. Therefore, Code~\ref{impl:at} also requires, at least, $3 f +1$ servers. 
  
 \begin{figure}
		\begin{algorithm}[H]
			\caption{\small Atomic Distributed Ledger; Code for Server $i$}
			\label{impl:at}
			\begin{multicols}{2}
			\begin{algorithmic}[1]
				\State \textbf{Init:} $S_i \leftarrow \emptyset$; 
				$get\_pending_i \leftarrow \emptyset$;
				$pending_i \leftarrow \emptyset$\vspace{-1em}
							\Statex
				\Receive{$c$, {\sc get}} 
				\State $\act{ABroadcast}(get,p,c)$ 
				\State add $(p,c)$ to $get\_pending_i$ 
				\EndReceive
				\Upon{$\act{ADeliver}(get,p,c)$}
				\State \indent \textbf{if} $(p,c) \in get\_pending_i$ \textbf{then}
				\State \indent \indent \textbf{send} response ($c$, {\sc getRes}, $S_i$) to $p$
				\State \indent \indent remove $(p,c)$ from $get\_pending_i$
				\EndUpon
				\Receive{$c$, {\sc append}, $r$} 
				\State $\act{ABroadcast}(append, r)$
				\State add $(c,r)$ to $pending_i$ 
				\EndReceive
				\Upon{$\act{ADeliver}(append, r)$}
				\If{$r \notin S_i$} 
				\State $S_i \leftarrow S_i  \extends r$
				
				\If {$\exists (c,r) \in pending_i$}
				\State {\footnotesize \textbf{send} response ($c$, {\sc appendRes}, {\sc ack}) to $r.p$}
				\State  remove $(c,r)$ from $pending_i$ 
				\vspace{-1em}
				\EndIf
				\EndIf
				\EndUpon
			\end{algorithmic}
			\end{multicols}
		\end{algorithm}\vspace{-1em}
\end{figure}

\begin{figure}
		\begin{algorithm}[H]
			\caption{\small Atomic Distributed Ledger with signatures; Code for Server $i$}
			\label{impl:at}
			\begin{multicols}{2}
			\begin{algorithmic}[1]
				\State \textbf{Init:} $S_i \leftarrow \emptyset$; 
				$pending_i \leftarrow \emptyset$\vspace{-1em}
							\Statex
				\Receive{$c$, {\sc append}, $r$, $cert_p$} 
				\State $\act{ABroadcast}(append, r, cert_p)$
				\State add $(c,r)$ to $pending_i$ 
				\EndReceive
				\Upon{$\act{ADeliver}(append, r, cert_p)$}
				\If{($r \notin S_i$) and (record $r$ is correctly signed by client $r.p$ with certificate $cert_p$)} 
				\State $S_i \leftarrow S_i  \extends r$
				
				\If {$\exists (c,r) \in pending_i$}
				\State {\footnotesize \textbf{send} response ($c$, {\sc appendRes}, {\sc ack}) to $r.p$}
				\State  remove $(c,r)$ from $pending_i$ 
				\vspace{-1em}
				\EndIf
				\EndIf
				\EndUpon
			\end{algorithmic}
			\end{multicols}
		\end{algorithm}\vspace{-1em}
\end{figure}

\begin{figure}
		\begin{algorithm}[H]
			\caption{\small Atomic Distributed Ledger with majority; Code for Server $i$}
			\label{impl:at}
			\begin{multicols}{2}
			\begin{algorithmic}[1]
				\State \textbf{Init:} $S_i \leftarrow \emptyset$; 
				$pending_i \leftarrow \emptyset$\vspace{-1em}
							\Statex
				\Receive{$c$, {\sc append}, $r$} 
				\State $\act{ABroadcast}(append, r, i)$
				\State add $(c,r)$ to $pending_i$ 
				\EndReceive
				\Upon{$\act{ADeliver}(append, r, j)$}
				\If{($r \notin S_i$) and (record $r$ has been A-Delivered exactly $f+1$ times from different servers)} 
				\State $S_i \leftarrow S_i  \extends r$
				
				\If {$\exists (c,r) \in pending_i$}
				\State {\footnotesize \textbf{send} response ($c$, {\sc appendRes}, {\sc ack}) to $r.p$}
				\State  remove $(c,r)$ from $pending_i$ 
				\vspace{-1em}
				\EndIf
				\EndIf
				\EndUpon
			\end{algorithmic}
			\end{multicols}
		\end{algorithm}\vspace{-1em}
\end{figure}

We first proof that no record fabricated and broadcasted by a malicious server will be inserted in the sequence $S$ of any correct process. 

\begin{lemma}
    In any execution $\ex$ of Algorithm Let $r'$ be a record s.t.: (i) $r'$ was not appended by any client $c$, and (ii) a server $s'$ invoked \act{ABroadcast}(append, r', *). Then no correct server $s$ will append $r'$ in $S_s$. 
\end{lemma}

\begin{theorem}
	The combination of the algorithms presented in Codes  \ref{code:interface-distributed} and \ref{impl:at} implements an atomic distributed ledger.
\end{theorem}
\begin{proof}
Similar to the SIGACT one.
\end{proof}

\paragraph{Some toughs on the next steps:} 

\begin{enumerate}
\item
The approach taking in this paper, as it is now, is \emph{permissioned}.  The type of ledger's membership is governed by the type of membership allowed by the consensus model.~\cite{DBLP:conf/osdi/CastroL99}. 

Maybe, we should  specify how $\act{ABroadcast}()$ and $\act{ADeliver}()$ behave, in terms of the properties of the consensus algorithm used to implement them  (e.g., PBFT~\cite{DBLP:conf/osdi/CastroL99}, XFT~\cite{199398}, etc).
\cg{CG: I believe this moves along the same lines that I propose above.}

\item
 I think that Hyperledger Fabric fills quite well into our approach. Currently there are (at least) two versions, depending on the used consensus protocol: \emph{Kafka} for crash failures, and \emph{PBFT} for Byzantine failures.
 
\item \cg{CG: Since the implementation in~\cite{10.1007/978-3-540-92221-6_4} does not need
	to know the number of servers, can't we extend the solution (with the appropriate adjustments of course) to permissionless blockchains? Would that be very hard?} 
  
 \end{enumerate}

Place a remark that in a similar manner we can implement Byzantine-tolerant versions of Sequential and Eventual consistent DLOs.
}
%%%%% END REMOVAL %%%%%

\remove{
\section{Sequential Consistency}

\begin{figure}[t]
	%\setlength{\columnsep}{5pt}
	%\begin{figure}[t]
	\begin{multicols}{2}
		\begin{algorithm}[H]
			\caption{\small Sequentially Consistent Distributed Ledger; Code for Server $i$}
			\label{impl:sqc}
			%\begin{multicols}{2}
			\begin{algorithmic}[1]
				\State \textbf{Init:} $S_i \leftarrow \emptyset$; $pending_i \leftarrow \emptyset$; $get\_pending_i \leftarrow \emptyset$ %\vspace{-1em}
				%		\Statex
				%		\State \textbf{Upon receiving} request ({\sc get}) from client $p$ do
				%		\State \textbf{send} response ({\sc getRes}, $\ledger$) to $p$
				%		\Statex
				%		\State \textbf{Upon receiving} request ({\sc append}, $r$) from client $p$ do
				%		\State $\act{ABroadcast}(r)$
				%		\State \textbf{add} $r$ to $pending$
				%		\Statex
				\Receive{c, {\sc get}, $\ell$}
				\If{$|S_i| \geq \ell$} 
				\State \textbf{send} response (c, {\sc getRes}, $S_i$) to $p$%\vspace{-1em}
				\Else
				\State \textbf{add} $(c,p,\ell)$ to $get\_pending_i$%\vspace{-1em}
				\EndIf
				\EndReceive
				%		\Statex
				\Receive{c, {\sc append}, $r$} 
				\State $\act{ABroadcast}(c, r)$
				\State \textbf{add} $(c,r)$ to $pending_i$%\vspace{-1em}
				\EndReceive
				%		\Statex
				\Upon{$\act{ADeliver}(c,r)$}
				%\State $\ledger \leftarrow \ledger  \extends r$
				\If{$r \notin S_i$} $S_i \leftarrow S_i  \extends r$
				\EndIf
				\If{$(c,r) \in pending_i$}
				\State \textbf{send} resp. (c, {\sc appendRes}, {\sc ack}, $|S_i|$) to $r.p$
				\State \textbf{remove} $(c,r)$ from $pending_i$
				\EndIf
				\If{$\exists (c',p,\ell) \in get\_pending_i: |S_i| \geq \ell$}
				\State \textbf{send} response ($c'$, {\sc getRes}, $S_i$) to $p$
				\State \textbf{remove} $(c',p,\ell)$ from $get\_pending_i$
				\EndIf
				\EndUpon
			\end{algorithmic}
			%\end{multicols}
		\end{algorithm}
		
		\begin{algorithm}[H]
			\caption{\small External Interface for Sequential Consistency Executed by a Process $p$}
			\label{code:interface-distributed-sqc}
			%	\begin{multicols}{2}
			\begin{algorithmic}[1]
				\State \dk{Let $f$ be an upper bound on the number of Byzantine servers}
				\State \textbf{Init:} $c \leftarrow 0$; $\ell_{last} \leftarrow 0$
				%\Statex
				\Function{$\ledger.{\get}$}{~}
				\State $c \leftarrow c + 1$
				\State {\bf send} request ($c$, {\sc get}, $\ell_{last}$)  to\dk{, at least, \cg{$2f +1$ } servers}
				\State \textbf{wait} response ($c$, {\sc getRes}, $V$) from \dk{$f +1$ different servers with the same $V$}	
				%		such that 
				%		\State \index \index $(V_{last}$ is a prefix of $V) \land( (r_{last} \in V) \lor (r_{last} = \bot) )$
				\State $\ell_{last} \leftarrow |V|$	
				\State \textbf{return} $V$%\vspace{-1em}
				\EndFunction
				%\Statex
				%		\columnbreak
				\Function{$\ledger.{\append}$}{r}
				\State $c \leftarrow c + 1$
				\State \textbf{send} request ($c$, {\sc append}, $r$) to\dk{, at least,\cg{ $2f +1$}  servers}
				\State \textbf{wait} response ($c$, {\sc appendRes}, $res$, $pos$) from \dk{$f +1$ different servers with the same $res$ and $pos$}	
				\State $\ell_{last} \leftarrow pos$
				\State \textbf{return} $res$
				\EndFunction
			\end{algorithmic}
			%	\end{multicols}
		\end{algorithm}
	\end{multicols}\vspace{-1em}
\end{figure}

\begin{theorem}
	The combination of the algorithms presented in Codes~\ref{code:interface-distributed-sqc} and~\ref{impl:sqc} implements a sequentially distributed ledger.
\end{theorem}
\begin{proof}
Similar to the SIGACT one.
\end{proof}
}
%\bibliographystyle{plain}
%\bibliography{biblio}
%\setlength{\bibsep}{0.0pt}
%\setlength{\bibitemsep}{0.0pt}

\bibliographystyle{plainurl}% the mandatory bibstyle
\bibliography{biblio}

% \begin{thebibliography}{1}
% \bibitem{DLO_SIGACT18}
% A. Fern{\'{a}}ndez Anta, K.~M. Konwar, C. Georgiou, and
%   N.~C. Nicolaou.
% \newblock Formalizing and implementing distributed ledger objects.
% \newblock {\em {SIGACT} News}, 49(2):58--76, 2018.\vspace{-.5em}

% \bibitem{coelho2018byzantine}
% P. Coelho, T.~C. Junior, A. Bessani, F. Dotti, and
%   F. Pedone.
% \newblock Byzantine fault-tolerant atomic multicast.
% \newblock In {\em Proc. of IEEE DSN 2018}, pp. 39--50.\vspace{-.5em}

% \bibitem{CRISTIAN1995158}
% F.~Cristian, H.~Aghili, R.~Strong, and D.~Dolev.
% \newblock Atomic broadcast: From simple message diffusion to byzantine
%   agreement.
% \newblock {\em Information and Computation}, 118(1):158 -- 179, 1995.\vspace{-.5em}

% \bibitem{DBLP:conf/srds/MilosevicHS11}
% Z. Milosevic, M. Hutle, and A. Schiper.
% \newblock On the reduction of atomic broadcast to consensus with byzantine
%   faults.
% \newblock In {\em Proc. of IEEE SRDS 2011}, pp. 235--244.\vspace{-.5em}

% \bibitem{DBLP:journals/mst/MostefaouiPRJ17}
% A. Most{\'{e}}faoui, M. Petrolia, M. Raynal, and C. Jard.
% \newblock Atomic read/write memory in signature-free byzantine asynchronous
%   message-passing systems.
% \newblock {\em Theory Comput. Syst.}, 60(4):677--694, 2017.
% \end{thebibliography}

\end{document}